\newcommand{\cA}{\mathcal{A}}
\newcommand{\cF}{\mathcal{F}}
\newcommand{\cE}{\mathcal{E}}
\newcommand{\cN}{\mathcal{N}}
\newcommand{\al}{\alpha}
\newcommand{\ga}{\gamma}
\newcommand{\del}{\delta}
\newcommand{\ka}{\kappa}
\newcommand{\Om}{\Omega}
\newcommand{\si}{\sigma}
\newcommand{\Si}{\Sigma}
\newcommand{\R}{\mathbb{R}}
\newcommand{\C}{\mathbb{C}}
\newcommand{\E}{\mathbb{E}}
\newcommand{\bP}{\mathbb{P}}
\newcommand{\la}{\lambda}
\newcommand{\eps}{\varepsilon}
\newcommand{\ti}{\times}
\newcommand{\norm}[1]{\left\|#1\right\|}%
\newcommand{\Prob}{\mathbb{P}}
\newcommand{\bZ}{{\mathbf Z}}
\newcommand{\xspace}{\hbox{\kern-2.5pt}}
\newtheorem{theorem}{Theorem}[section]
\newtheorem{lemma}[theorem]{Lemma}
\newtheorem{corollary}[theorem]{Corollary}
\theoremstyle{definition}
\newtheorem{design-crit}[theorem]{Design Criterion}
\newtheorem{remark}[theorem]{Remark}
\theoremstyle{remark}
\newtheorem{example}[theorem]{Example}
\begin{document}

\title{On the gap between RIP-properties and sparse recovery conditions}

\author{Sjoerd Dirksen}

\address{RWTH Aachen University\\
Lehrstuhl C f{\"u}r Mathematik (Analysis)\\
Pontdriesch 10\\
52062 Aachen\\
Germany} \email{dirksen@mathc.rwth-aachen.de}

\author{Guillaume Lecu\'{e}}

\address{CNRS and Ecole Polytechnique \\
CMAP\\
Route de Saclay\\
91120 Palaiseau \\
France} \email{guillaume.lecue@cmap.polytechnique.fr}

\author{Holger Rauhut}

\address{RWTH Aachen University\\
Lehrstuhl C f{\"u}r Mathematik (Analysis)\\
Pontdriesch 10\\
52062 Aachen\\
Germany} \email{rauhut@mathc.rwth-aachen.de}

\keywords{Restricted isometry property, compressive sensing, $\ell_p$-constrained basis pursuit, Gaussian random matrix, quantized compressive sensing.}

\maketitle

\begin{abstract}
We consider the problem of recovering sparse vectors from underdetermined linear measurements via $\ell_p$-constrained basis pursuit. Previous analyses of this problem based on generalized restricted isometry properties have suggested that two phenomena occur if $p\neq 2$. First, one may need substantially more than $s \log(en/s)$ measurements (optimal for $p=2$) for uniform recovery of all $s$-sparse vectors. Second, the matrix that achieves recovery with the optimal number of measurements may not be Gaussian (as for $p=2$). We present a new, direct analysis which shows that in fact neither of these phenomena occur. Via a suitable version of the null space property we show that a standard Gaussian matrix provides $\ell_q/\ell_1$-recovery guarantees for $\ell_p$-constrained basis pursuit in the optimal measurement regime. Our result extends to several heavier-tailed measurement matrices. As an application, we show that one can obtain a consistent reconstruction from uniform scalar quantized measurements in the optimal measurement regime.  
\end{abstract}

\section{Introduction}

Compressive sensing \cite{do06-2,carota06,FoR13} 
has established itself in the recent years as a rapidly growing research area with various promising 
signal and image processing applications and beyond, and which has triggered many developments on the theoretical side. 
The theory predicts that (approximately) sparse signals can be accurately recovered from incomplete and perturbed linear measurements. 
The measurement process is described by a measurement matrix $A \in \C^{m \times n}$ with $m < n$. While the na\"{i}ve reconstruction 
approach via $\ell_0$-minimization is NP-hard \cite{Nat95}, several tractable recovery methods have been proposed including basis pursuit 
($\ell_1$-minimization), iterative hard thresholding and greedy methods. For all these methods rigorous recovery guarantees have been shown,
see \cite{FoR13} for details and further references.

The restricted isometry property (RIP) is a well-established tool to analyze the performance of sparse recovery methods. 
The standard version defines the restricted isometry constant of order $s$ as the smallest number $\delta_s$ such that
\begin{equation}\label{eqn:RIP2del}
(1-\delta_s) \|x\|_2^2 \leq \|A x \|_2^2 \leq (1+\delta_s) \|x\|_2^2 
\mbox{ for all } x \in \Si_s,
\end{equation}
where $\Sigma_s$ is the set of all $s$-sparse vectors in $\C^n$ and $\|\cdot\|_2$ denotes the usual $\ell_2$-norm. 
If $\delta_s$ is sufficiently small we say that $A$ satisfies the RIP.
If $\delta_s < \delta_0$ for some suitably small $\delta_0$, then given measurements $y = A \hat{x} + e$ with $\|e\|_2 \leq \varepsilon$, the $\ell_2$-constrained $\ell_1$-minimization
program (also known as basis pursuit denoising) 
\[
\min_{z\in \C^n} \|z\|_1 \quad \mbox{ subject to } \quad  \|Az - y\|_2 \leq \varepsilon
\]
recovers a vector $x^\sharp$ which satisfies
\begin{equation}\label{error:bound2}
\|\hat{x} - x^\sharp\|_2 \lesssim s^{-1/2}\si_s(\hat{x})_1 + \frac{\eps}{m^{1/2}},
\end{equation}
where $\si_s(\hat{x})_1 = \inf_{\|z\|_0 \leq s} \|\hat{x}-z\|_1$ is the error of best $s$-term approximation to $\hat{x}$ in $\ell_1$.
A (scaled) Gaussian random matrix satisfies the RIP with high probability provided that
\begin{equation}\label{bound:RIP2}
m \geq C s \log(en/s),
\end{equation}
where $C>0$ is an absolute constant. This bound is optimal, see also below. 

In certain cases it is of interest to measure the level of noise in $\ell_p$-norms with $p$ different from $2$ and to study the corresponding $\ell_p$-constrained
basis pursuit denoising program
\begin{equation}\label{BPDNp}
\min_{z\in \C^n} \|z\|_1 \quad \mbox{ subject to } \quad
\|y-Az\|_p\leq \eps \tag{$\operatorname{BPDN}_p$}.
\end{equation}
The case $p=\infty$ appears, for instance, in quantized compressed sensing \cite{JHF11}, where $\ell_\infty$-constrained basis pursuit can ensure consistent reconstruction, see also Section~\ref{sec:quantizedCS} below. The program for $p=1$ is more robust to outliers than standard basis pursuit denoising. Also, when considering random measurement noise, different values of $p$ are appropriate depending on the distribution of the noise (see e.g.\ \cite{JHF11},\cite{Fuc09}). For example, $p=1$ is well-suited for double-exponential noise, whereas $p=2$ is appropriate for Gaussian noise.

Previous attempts in analyzing (BPDN$_p$) have used RIP conditions of the form
\begin{equation}\label{RIPpq}
c\|x\|_q \leq \|Ax\|_p\leq C\|x\|_q, \quad \operatorname{for \ all} \ x\in \Si_s. \tag{$\operatorname{RIP}_{p,q}$}
\end{equation}
It is part of the folklore in compressive sensing that  (RIP$_{p,q}$)
implies stable and robust recovery via (BPDN$_p$), with an
$\ell_q$-bound on the reconstruction error (see \cite{BGI08,JHF11} for special cases). 
Unfortunately, all available results on the number of required measurements for Gaussian and other random matrices ensuring (RIP$_{p,q}$)
scale significantly worse than \eqref{bound:RIP2} when $p \neq 1,2$. For certain values of $p$ and $q$, 
there are even negative results available which state that no matrix whatsoever can satisfy (RIP$_{p,q}$) in the optimal parameter regime
\eqref{bound:RIP2}. A more detailed overview is given below.

The purpose of this paper is to illuminate the discrepancy between on the one hand, the requirements needed for a matrix to satisfy an RIP condition 
of the form (RIP$_{p,q}$) and on the other hand, the conditions under which one can stably and
robustly recover any $s$-sparse (or approximately $s$-sparse) vector
$\hat{x}\in \C^n$ from noisy linear measurements $y=A\hat{x}+e$ via the generalized basis pursuit denoising program  
(BPDN$_p$). Our results show that a study of the statistical properties of
(BPDN$_p$) via the $\ell_q$-robust null space property yields better
results than via (RIP$_{p,q}$), both in terms of the required number of measurements as well as the allowed distribution of the
random measurements. In particular, one can achieve stable and robust reconstruction with Gaussian random matrices
in the optimal parameter regime \eqref{bound:RIP2} for any $1\leq p\leq \infty$. This result extends to various random matrices with heavier-tailed 
entries such as exponential matrices, see Section~\ref{sec:generalDist} for more information. Our proof relies on the small ball method developed in \cite{KoM13,Sh14_1,Sh14_2,Sh14_3}.   

{\bf Notation.} The usual $\ell_p$-norm on $\C^n$ is denoted by $\|x\|_p = (\sum_{j=1}^n |x_j|^p)^{1/p}$ for $1 \leq p < \infty$ and $\|x\|_\infty = \max_{j=1,\hdots,n} |x_j|$. We let $B_{\ell_p^n}$ and $S_{\ell_p^n}$ denote the associated unit ball and unit sphere, respectively. The expression $\|x\|_0 := \#\{j : x_j \neq 0\}$ counts the number of nonzero coefficients of $x$.
The expectation of a random variable $Z$ is written $\E Z$ and the probability of an event $E$ is denoted by $\mathbb{P}(E)$.
The $L_p$-norm of a measurable function $f$ with respect to a measure $\mu$ is denoted by $\|f\|_{L_p(\mu)}$. A Rademacher random variable $\eps$ satisfies $\bP(\eps=1)=\bP(\eps=-1)=1/2$ and a Rademacher sequence is a sequence of independent Rademacher random variables. For $t\in \R$, $\lfloor t\rfloor$ is the largest integer smaller than $t$ and $\lceil t\rceil$ is the smallest integer larger than $t$. Finally, we write $A\lesssim B$ if $A\leq cB$ for a universal constant $c>0$.

\section{The relation between (RIP$_{p,q}$) and (BPDN$_p$)}

Let us first summarize the known results on (RIP$_{p,q}$) and its
implication for sparse recovery via (BPDN$_p$). As is well known and already described above, the
(RIP$_{2,2}$) property was introduced in compressed sensing by
Cand\`{e}s and Tao in \cite{cata05,CaT06}. They showed that an $m\ti n$ matrix $A$ scaled by $m^{-1/2}$ with i.i.d.\ standard Gaussian entries satisfies 
\eqref{eqn:RIP2del} with probability $1-\eta$ if $m\gtrsim \del_s^{-2}(s\log(en/s)+\log(\eta^{-1}))$. 
If $A$ has this property with $\del_s$ smaller than a fixed threshold and $\|y-A\hat{x}\|_2\leq\eps$, 
then any minimizer $x^{\#}$ for (BPDN$_{2}$) satisfies an $\ell_q/\ell_1$-guarantee of the form
\[
\|\hat{x}-x^{\#}\|_q \lesssim s^{1/q-1}\si_s(\hat{x})_1 + s^{1/q-1/2} m^{-1/2} \eps 
\]
for any $1\leq q \leq 2$; the special case $q=2$ is stated in \eqref{error:bound2}. In particular, if $\hat{x}$ is exactly $s$-sparse (so $\si_s(\hat{x})_1=0$) and $\eps=0$, then  $\hat{x}$ can be
reconstructed exactly. Conversely, it is known that $m\gtrsim
s\log(n/s)$ measurements are also necessary for exact reconstruction
of all $s$-sparse vectors (see e.g.\ \cite[Theorem 10.11]{FoR13}). \par
A very similar connection exists between (RIP$_{1,1}$) and (BPDN$_1$) \cite{BGI08}. Indeed, the adjacency matrix of a random left $d$-regular bipartite graph with $n$ left vertices and $m$ right vertices with probability $1-\eta$ satisfies an (RIP$_{1,1}$) condition of the form
$$(1-\del)^{1/2}\|x\|_1\leq d^{-1}\|Ax\|_1\leq \|x\|_1 \qquad \operatorname{for \ all} \ x\in \Si_s,$$
provided that $d=\lceil \del^{-1}\log(en/(s\eta))\rceil$ and $m\geq
c_{\del}s\log(en/(s\eta))$. As a consequence \cite[Theorem
12]{BGI08}, if $\|y-A\hat{x}\|_1\leq\eps$ then any minimizer $x^{\#}$ of (BPDN$_1$) satisfies the $\ell_1/\ell_1$ guarantee
$$\|\hat{x}-x^{\#}\|_1\leq C(\del)\Big(\si_s(\hat{x})_1 + \frac{\eps}{d}\Big),$$
where $C(\del)=O((1-2\del)^{-1})$ for $\del\uparrow 1/2$.\par 
Interestingly, the rescaled adjacency matrix $d^{-1}A$ does not
satisfy the (RIP$_{2,2}$) condition. In fact, any (RIP$_{2,2}$)-matrix
with binary entries must satisfy $m\geq s^2\log(en/s)$
\cite[Theorem~4.6.1]{Cha10}. Conversely, if $A$ is standard Gaussian, $m^{-1/2}A$ cannot satisfy an (RIP$_{1,1}$) condition for $m\sim s\log(en/s)$ \cite{BGI08}. To see this, one can consider $x=e_1$, $\tilde{x}=s^{-1}\sum_{i=1}^s e_i$, where the $e_i$ denote the standard basis vectors. 
Then $\|x\|_1=\|\tilde{x}\|_1=1$, but $\|Ax\|_1\sim \sqrt{s}\|A\tilde{x}\|_1$.\par 
The two positive results for $p=q=2$ and $p=q=1$ have triggered further research on (BPDN$_p$) via restricted isometry properties. In \cite{JHF11} it was shown that a standard $m\ti n$ Gaussian matrix with
\begin{equation}
\label{eqn:JHFCond}
m\gtrsim \Big(\del^{-2}s\log(en/(s\del)) + \del^{-2}\log(\eta^{-1})\Big)^{p/2} + (p-1)2^{p-1}
\end{equation}
satisfies an (RIP$_{p,2}$) property for $2\leq p<\infty$ of the form
$$(1-\del)^{1/2}\|x\|_2\leq \mu_p^{-1}\|Ax\|_p\leq (1+\del)^{1/2}\|x\|_2 \; \mbox{ for  all }  x\in \Si_s,$$
where $\mu_p = \E\|G\|_p$ and $G$ is a standard $m$-dimensional Gaussian random vector. If $A$ satisfies this property for sparsity levels $s,2s,3s$ with constants $\del_s,\del_{2s},\del_{3s}$ small enough (see \cite[Theorem 1]{BJK14} for a precise statement), then for all $\hat{x}\in \C^n$ with $\|y-A\hat{x}\|_p\leq\eps$, any minimizer $x^{\sharp}$ of (BPDN$_p$) satisfies an $\ell_2/\ell_1$-guarantee
$$\|\hat{x}-x^{\sharp}\|_2 \lesssim s^{-1/2}\si_s(\hat{x})_1 + \frac{\eps}{\mu_p}.$$
In \cite{AGR15} it is shown that the $m\ti n$ adjacency matrix $A$ of a random left $d$-regular bipartite graph with $n$ left vertices and $m$ right vertices with high probability satisfies an (RIP$_{p,p}$) property
\begin{equation*}
(1-\del)\|x\|_p^p \leq d^{-1}\|Ax\|_p^p\leq (1+\del)\|x\|_p^p \qquad \operatorname{for \ all} \ x\in \Si_s,
\end{equation*}
provided that, in the case $1<p<2$,
\begin{align*}
m & \geq C_p(s^p\del^{-2}+s^{4-2/p-p}\del^{-2/(p-1)})\log n,\\
d & \geq \tilde{C}_p(\del^{-1}s^{p-1} + s^{(p-1)/p}\del^{-1/(p-1)})\log n,
\end{align*}
where $C_p,\tilde{C}_p$ are singular for $p\downarrow 1$ and $p\uparrow 2$, or in the case $2<p<\infty$,
\begin{align*}
m & \geq p^{Cp} \del^{-2}s^p\log^{p-1}(n),\\
d & \geq p^{Cp}\del^{-1} s^{p-1}\log^{p-1}(n).
\end{align*}
If $A$ satisfies an RIP$_{p,p}$-property for $p>1$, then one can recover all $\hat{x}\in \C^n$ with $\|y-A\hat{x}\|_p\leq\eps$ via (BPDN$_p$) with an $\ell_p/\ell_1$-guarantee of the form
$$\|\hat{x}-x^{\sharp}\|_p \lesssim s^{1/p-1}\si_s(\hat{x})_1 + \eps,$$
see \cite[Theorem A.6]{AGR15} for a more precise statement. Interestingly, \cite{AGR15} also proved a lower bound on $m$ assuming that the 
$m \times n$ matrix satisfies (RIP$_{p,p}$). Their result \cite[Theorem 4.1]{AGR15} essentially shows that one needs at least $m\gtrsim s^p$ measurements for $p\neq 2$, so that the case $p=2$ should be considered a singularity. A straightforward modification of their argument shows that to satisfy (RIP$_{p,2}$) one needs at least $m\gtrsim s^{p/2}$, so that also the result in \cite{JHF11} (cf.\ (\ref{eqn:JHFCond})) cannot be improved significantly. We leave the verification of this implication to the interested reader.\par  
To summarize, two important phenomena occur when moving away from the familiar (RIP$_{2,2}$). First, one may need to consider different random matrix constructions to satisfy an RIP property with the optimal number of measurements. Second, the optimal scaling of the number of measurements in terms of the signal sparsity may dramatically worsen, especially for $p>2$. 

\section{Sparse recovery via BPDN$_{p}$: improved results}

One might think that the two phenomena concerning the (RIP$_{p,q}$) properties for $p \neq 2$ mentioned above, may carry over to recovery results via (BPDN$_p$) (see e.g.\ \cite{AGR15,JHF11}), in particular, that the minimal required number of measurements depends significantly worse than linear on the sparsity.
We will now show that rather the contrary is true: the scaling in terms of the sparsity generally does not worsen if $p\neq 2$ and, moreover, the optimal recovery results are realized by a standard Gaussian matrix.

Let us note that earlier work already identified a looseness in the
relation between the classical (RIP$_{2,2}$) and (BPDN$_2$). For
example, if $A$ has independent, isotropic, log-concave rows, then
(\ref{eqn:RIP2del}) is satisfied with high probability if $m \geq
c(\del) s\log^2(en/s)$ (cf. \cite{ALL11}), and the square in the $\log$-factor cannot be
removed (cf. \cite[Proposition~5.5]{MR2796091}). On the other hand, this matrix still satisfies,
with high probability, the exact reconstruction property for
$s$-sparse vectors via $\ell_1$-minimization in the optimal
measurement regime $m\simeq s\log (en/s)$
(cf. \cite[Theorem~7.3]{Kol11} -- see also \cite{MR3134335} for the
special case of measurement matrices with i.i.d.\
Weibull entries). More recently, near-matching necessary and sufficient conditions on the moments of the i.i.d. entries of a matrix to satisfy the exact
reconstruction property (and more generally, stable and robust
recovery via (BPDN$_2$)) in this regime were recently derived by the
second-named author and Mendelson \cite{LeM14}. We recover as a special case a variation of this (sufficient) result, see Corollary~\ref{cor:condEntries} below. Such a result cannot be
proved via an RIP-based analysis since the right-hand side of
(RIP$_{2,2}$), i.e.,
$$\|Ax\|_2\leq C\|x\|_2 \quad \mbox{for all } x\in \Si_s$$ 
requires either strong concentration properties or a larger number of measurements $m$ than the optimal number $s\log(en/s)$
(see the discussion in \cite{LeM14} and Section~\ref{sec:RIPRIP} for more details).\par   
For our analysis we let $X_1,\ldots,X_m$ be i.i.d.\ copies of a random
vector $X$ in $\C^n$, which is defined on a probability space $(\Om,\cA,\bP)$. Let $P_m$ be the associated empirical measure
$$P_m = \frac{1}{m} \sum_{i=1}^m \del_{X_i}.$$
The following observation follows immediately from the proof of
Theorem 2.1 in \cite{KoM13}, by replacing the ``Chebyshev'' bound
$$\|f\|^2_{L_2(P_m)} \geq u^2 P_m(|f|\geq u)$$
by 
$$\|f\|^p_{L_p(P_m)} \geq u^p P_m(|f|\geq u).$$\begin{lemma}
\label{lem:KoM13}
Fix $1\leq p<\infty$. Let $\cF$ be a class of functions from $\C^n$ into $\C$. Consider 
$$Q_{\cF}(u) = \inf_{f\in \cF} \bP(|f(X)|\geq u)$$
and 
$$R_m(\cF) = \E\sup_{f\in \cF}\Big|\frac{1}{m} \sum_{i=1}^m \eps_i f(X_i)\Big|,$$
where $(\eps_i)_{i\geq 1}$ is a Rademacher sequence. Let $u>0$ and $t>0$, then, with probability at least $1-2e^{-2t^2}$,
$$\inf_{f\in \cF} \frac{1}{m}\sum_{i=1}^m |f(X_i)|^p \geq u^p\Big(Q_{\cF}(2u) - \frac{4}{u}R_m(\cF) - \frac{t}{\sqrt{m}}\Big).$$
\end{lemma}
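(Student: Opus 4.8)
The plan is to run the small-ball (Mendelson) argument as indicated, replacing the $L_2$ Chebyshev step by the elementary pointwise bound $|f(X_i)|^p \ge u^p \mathbf{1}_{\{|f(X_i)|\ge u\}}$. Averaging this over $i$ gives, for every $f$,
\[
\frac{1}{m}\sum_{i=1}^m |f(X_i)|^p \;\ge\; u^p\, \frac{1}{m}\sum_{i=1}^m \mathbf{1}_{\{|f(X_i)|\ge u\}},
\]
so it suffices to lower bound the empirical frequency $\frac1m\sum_i \mathbf{1}_{\{|f(X_i)|\ge u\}}$ uniformly over $f\in\cF$. The natural device is a Lipschitz surrogate: let $\phi:[0,\infty)\to[0,1]$ be the ramp equal to $0$ on $[0,u]$, linear on $[u,2u]$, and $1$ on $[2u,\infty)$. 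Then $\phi$ is $1/u$-Lipschitz with $\phi(0)=0$ and is sandwiched as $\mathbf{1}_{\{t\ge 2u\}}\le\phi(t)\le\mathbf{1}_{\{t\ge u\}}$.

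Using the lower sandwich to bound the indicator from below by $\phi(|f(X_i)|)$, and the upper sandwich to get $\E\phi(|f(X)|)\ge\bP(|f(X)|\ge 2u)\ge Q_{\cF}(2u)$ for every $f\in\cF$, I would write
\[
\frac1m\sum_{i=1}^m\phi(|f(X_i)|) = \E\phi(|f(X)|) - \Big(\E\phi(|f(X)|)-\frac1m\sum_{i=1}^m\phi(|f(X_i)|)\Big).
\]
Setting $Z=\sup_{f\in\cF}\big(\E\phi(|f(X)|)-\frac1m\sum_{i=1}^m\phi(|f(X_i)|)\big)$, this yields $\inf_{f\in\cF}\frac1m\sum_i|f(X_i)|^p\ge u^p\big(Q_{\cF}(2u)-Z\big)$, and it remains to show that $Z\le \tfrac4u R_m(\cF)+\tfrac{t}{\sqrt m}$ on the asserted event.

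For the concentration, $Z$ is a function of the independent variables $X_1,\dots,X_m$ with bounded differences $1/m$, since $\phi\in[0,1]$ means that replacing one $X_i$ changes $Z$ by at most $1/m$. The bounded differences (McDiarmid) inequality then gives $Z\le\E Z+t/\sqrt m$ outside an event of probability $2e^{-2t^2}$, which is exactly the stated failure probability. To control $\E Z$, I would apply the standard symmetrization inequality (factor $2$) and then the Ledoux--Talagrand contraction principle in its absolute-value form (a further factor $2$): treating $z\mapsto\phi(|z|)$ as a single $1/u$-Lipschitz map vanishing at the origin, the contraction peels it off at cost $2/u$ and leaves precisely $R_m(\cF)$. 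Combining, $\E Z\le \tfrac4u R_m(\cF)$, and substitution finishes the proof.

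The symmetrization, contraction, and McDiarmid ingredients are routine; the point requiring care is the surrogate step, namely choosing $\phi$ so that the two sandwich inequalities simultaneously insert the threshold $2u$ into $Q_{\cF}$ and produce the Lipschitz constant $1/u$ that becomes the $4/u$ in front of $R_m(\cF)$. A secondary technical subtlety is that the contraction principle should be applied to the \emph{composed} map $z\mapsto\phi(|z|)$ in one step (for $\C$-valued $f$ one identifies $\C\cong\R^2$ and uses the corresponding vector contraction), rather than peeling $\phi$ and $|\cdot|$ separately; doing it in one step is what keeps the constant at $4/u$ and compares directly against the complex Rademacher average defining $R_m(\cF)$.
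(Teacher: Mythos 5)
Your proposal is correct and follows essentially the same route as the paper, which simply invokes the proof of Theorem~2.1 in \cite{KoM13} with the $L_2$ Chebyshev step replaced by the $L_p$ bound $\|f\|_{L_p(P_m)}^p\geq u^p P_m(|f|\geq u)$ --- exactly your opening step. The remainder of your argument (ramp surrogate sandwiched between the indicators at levels $u$ and $2u$, bounded differences for the $t/\sqrt{m}$ term, and symmetrization plus contraction for the $4/u\,R_m(\cF)$ term) is precisely the Koltchinskii--Mendelson small-ball argument that the paper delegates to the citation, and your remark about applying the contraction to the composed map $z\mapsto\phi(|z|)$ in one step is the right way to handle the constant.
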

Consider the following sparse recovery problem: we take $m$ noisy
linear measurements of an (approximately) $s$-sparse signal $\hat{x}$, i.e., we
observe $y=A\hat{x}+e$ where $A\in \C^{m\ti N}$ and we suppose that
the noise satisfies $\|e\|_p\leq \eps$. We aim to recover $\hat{x}$
from $y$ via (BPDN$_{p}$). For the analysis we recall the following
standard notion (cf. for instance \cite[Definition~4.21]{FoR13}). Given $q\geq 1$, we say that $A$ satisfies the $\ell_q$-robust null space property of order $s$ with constants $0<\rho<1$ and $\tau>0$ with respect to a norm $\|\cdot\|$ if for any set $S\subset[n]$ with $|S|\leq s$ and any $x\in \C^n$, 
$$\|x_S\|_q\leq \frac{\rho}{s^{1-1/q}}\|x_{S^c}\|_1 + \tau \|Ax\|.$$
If $A$ has this property, then any solution $x^{\#}$ to
$$\min_{z\in \C^n}\|z\|_1 \quad \mbox{ subject  to } \quad \|y-Az\|\leq \eps$$
satisfies, for any $1\leq r\leq q$, the reconstruction error bound
$$\|\hat{x}-x^{\#}\|_r\leq C_{\rho} s^{1/r-1}\si_s(\hat{x})_1 + \tau D_{\rho} s^{1/r-1/q}\eps,$$
with $C_{\rho}=(1+\rho)^2/(1-\rho)$ and $D_{\rho} =
(3+\rho)/(1-\rho)$ when $\norm{e}\leq \eps$ (cf. \cite[Theorem~4.25]{FoR13}).\par
To analyze $\ell_q$-robust null space properties, we introduce the cone
\begin{equation*}
T_{\rho,s}^q = \big\{x\in\C^n: \exists S \subset[n],\; |S|=s : \|x_S\|_q\geq\frac{\rho}{s^{1-1/q}}\|x_{S^c}\|_1\big\}. 
\end{equation*}
Note that $T_{\rho,s}^q$ contains $\Si_s$. We use the following observation.
\begin{lemma}\label{lem:Dsq}
Fix $1\leq q<\infty$. Set
$$\Si_s^q:=\{x\in \C^n \ : \ \|x\|_0\leq s, \ \|x\|_q=1\}$$
and let $D_s^q$ be its convex hull. Then $D_s^q$ is the unit ball with respect to the norm
$$\|x\|_{D_s^q}:=\sum_{\ell=1}^{\lceil n/s \rceil}\Big(\sum_{i\in I_{\ell}} x_i^{*q}\Big)^{1/q},$$
where $I_1,\ldots,I_{\lceil n/s \rceil}$ form a uniform partition of $[n]$, i.e., 
$$I_{\ell}=\left\{\begin{array}{ll}
\{s(\ell-1)+1,\ldots,s\ell\}, & \ell=1,\ldots, \lceil n/s \rceil-1,\\
\{s(\lceil n/s \rceil-1)+1,\ldots, n\}, & \ell=\lceil n/s \rceil,
\end{array}\right.
$$
and $x^*$ is the nonincreasing rearrangement of $x$. As a consequence,
\begin{equation*}
T_{\rho,s}^q\cap B_{\ell_q^n}\subset (2+\rho^{-1})D_s^q.
\end{equation*}
\end{lemma}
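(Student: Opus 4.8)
The plan is to prove the two assertions of Lemma~\ref{lem:Dsq} in sequence: first identify the gauge (Minkowski functional) of the convex hull $D_s^q$ as the stated norm $\|\cdot\|_{D_s^q}$, and then deduce the cone inclusion as a corollary.

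\textbf{Identifying the norm.} First I would observe that since $\Si_s^q$ is symmetric (closed under multiplication by unimodular scalars) and its convex hull $D_s^q$ is therefore a centrally symmetric convex body, $\|\cdot\|_{D_s^q}$ is indeed a norm and it suffices to show that $D_s^q = \{x : \|x\|_{D_s^q}\le 1\}$. The key structural remark is that the candidate norm $\|x\|_{D_s^q} = \sum_{\ell}\big(\sum_{i\in I_\ell} x_i^{*q}\big)^{1/q}$ is a \emph{block} version of the $\ell_1/\ell_q$ norm evaluated on the nonincreasing rearrangement: one sorts the coordinates, cuts them into consecutive blocks of length $s$, takes the $\ell_q$-norm on each block, and sums. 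To show $D_s^q$ is contained in the unit ball of this norm, I would take an extreme point $x\in\Si_s^q$ (so $\|x\|_0\le s$, $\|x\|_q=1$); since $x$ has at most $s$ nonzero entries, after rearrangement all its mass sits in the first block $I_1$, giving $\|x\|_{D_s^q}=\|x^*\|_q=\|x\|_q=1$; by convexity and the triangle inequality for $\|\cdot\|_{D_s^q}$ every point of $D_s^q$ then has $\|\cdot\|_{D_s^q}\le 1$. The reverse inclusion is the substantive direction: given $x$ with $\|x\|_{D_s^q}\le 1$, I must exhibit $x$ as a convex combination of $s$-sparse unit vectors in $\ell_q$. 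The natural construction is to split $x$ according to the block decomposition of its sorted coordinates, writing $x=\sum_\ell x^{(\ell)}$ where $x^{(\ell)}$ is supported on the (original positions of the) coordinates in block $I_\ell$; each $x^{(\ell)}$ is $s$-sparse, and setting $\la_\ell=\|x^{(\ell)}\|_q = \big(\sum_{i\in I_\ell}x_i^{*q}\big)^{1/q}$ one has $\sum_\ell\la_\ell=\|x\|_{D_s^q}\le 1$, so $x=\sum_\ell \la_\ell\,(x^{(\ell)}/\la_\ell)$ is a subconvex combination of elements of $\Si_s^q$ (the deficit $1-\sum_\ell\la_\ell$ being absorbed by including the origin, which lies in $D_s^q$). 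This shows $x\in D_s^q$.

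\textbf{The cone inclusion.} For the consequence, I would take $x\in T_{\rho,s}^q\cap B_{\ell_q^n}$, so there is a set $S$ with $|S|=s$ and $\|x_S\|_q\ge \rho\, s^{1/q-1}\|x_{S^c}\|_1$, while $\|x\|_q\le 1$. The strategy is to bound $\|x\|_{D_s^q}$ by $2+\rho^{-1}$ and invoke the first part. I would split the sum defining $\|x\|_{D_s^q}$ into the first block and the rest. The first $\ell_q$-block is controlled by $\|x\|_q\le 1$, contributing at most a bounded constant. For the remaining blocks (the ``tail'' of the sorted sequence) the standard trick is that the $\ell_q$-mass of each later block is dominated, via the nonincreasing rearrangement, by the $\ell_1$-mass of the immediately preceding block divided by $s^{1-1/q}$; summing telescopically bounds the tail contribution by $s^{1/q-1}\|x\|_1$, and then by $s^{1/q-1}(\|x_S\|_1+\|x_{S^c}\|_1)$. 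Finally one uses $\|x_S\|_1\le s^{1-1/q}\|x_S\|_q\le s^{1-1/q}\|x\|_q$ and the cone hypothesis $\|x_{S^c}\|_1\le \rho^{-1}s^{1-1/q}\|x_S\|_q\le \rho^{-1}s^{1-1/q}$ to turn everything into the constant $2+\rho^{-1}$.

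\textbf{Main obstacle.} The routine parts are the symmetry/triangle-inequality bookkeeping and the telescoping tail estimate (a familiar rearrangement argument). The step I expect to require the most care is the reverse inclusion in the norm identification, namely verifying that the block-decomposition of a point along its \emph{sorted} order genuinely produces $s$-sparse pieces whose $\ell_q$-norms sum to exactly $\|x\|_{D_s^q}$, and that this is a bona fide (sub)convex combination — one must be attentive that the nonincreasing rearrangement may permute coordinates across blocks, so the pieces $x^{(\ell)}$ should be defined by pulling the block structure of $x^*$ back to the original index set via the sorting permutation, after which the equality $\sum_\ell\|x^{(\ell)}\|_q=\|x\|_{D_s^q}$ holds by construction.
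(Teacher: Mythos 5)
Your proposal is correct and follows essentially the same route as the paper: the gauge identification via the sorted block decomposition (including the subconvex-combination point you flag, which the paper handles the same way), followed by the telescoping rearrangement estimate for the cone inclusion. The only cosmetic difference is in the tail bookkeeping — you bound all blocks $\ell\geq 2$ by $s^{1/q-1}\|x\|_1$ and then split $\|x\|_1=\|x_S\|_1+\|x_{S^c}\|_1$, whereas the paper isolates the first two blocks and bounds $\sum_{\ell\geq 3}$ directly by $\rho^{-1}\big(\sum_{i\in I_1}x_i^{*q}\big)^{1/q}$ — and both yield the same constant $2+\rho^{-1}$.
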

\begin{proof}
We  proceed by making straightforward modifications to the proof of
\cite[Lemma 3]{KaR15} (see also \cite[Lemma
4.5]{RuV08} or \cite{MR2373017}), which corresponds to the case $q=2$.  

A vector $x \in D_s^q$ can be represented as $x=\sum_i \alpha_i  x_i$ with $\alpha_i\geq0$, $\sum_i\alpha_i=1$ and $x_i\in S_{\ell_q^n}$, $\|x_i\|_0\leq s$. In particular, $\|x_i\|_{D_s^q}=\|x_i\|_q=1$. By the triangle inequality
$$
\|x\|_{D_s^q}\leq\sum_i\alpha_i\|x_i\|_{D_s^q}=\sum_i\alpha_i=1,
$$
so $D_s^q$ is contained in the $\|\cdot\|_{D_s^q}$-unit ball. To prove the
reverse inclusion, suppose that $\|x\|_{D_s^q}\leq 1$. We partition
the index set $[n]$ into subsets $S_1$, $S_2$, \ldots of size $s$,
such that $S_1$ corresponds to the indices of the $s$ largest entries of $x$, $S_2$ to the next $s$ ones, etc. Set $\alpha_i=\|x_{S_i}\|_q$. Then $x$ can be written as
$$
x=\sum_{i:\alpha_i\neq 0}\alpha_i(\alpha_i^{-1} x_{S_i}),
$$
where
$$
\sum_{i:\alpha_i\neq 0}\alpha_i=\sum_i \|x_{S_i}\|_q=\|x\|_{D_s^q}\leq 1.
$$
Clearly, for any $\alpha_i\neq 0$, $\|\alpha_i^{-1}x_{S_i}\|_q=1$ and
$\|\alpha_i^{-1}x_{S_i}\|_0\leq s$, so $x\in D_s^q$.

To prove the second statement, fix $x \in T_{\rho,s}^q\cap B_{\ell_q^n}$ and write
\begin{equation}
\|x\|_{D_s^q} = \Big(\sum_{i\in I_1}
x_i^{*q}\Big)^{1/q}+\Big(\sum_{i\in I_2} x_i^{*q}\Big)^{1/q}+\sum_{\ell\geq 3} \Big(\sum_{i\in I_{\ell}} x_i^{*q}\Big)^{1/q}\label{eqn:splitDqs}.
\end{equation}
To bound the last term, note that for each $i\in I_{\ell}$, $\ell\geq 3$,
$$
x^*_i\leq\frac{1}{s}\sum_{j\in I_{\ell-1}}x^*_j\quad\text{and}\quad \Big(\sum_{i\in I_{\ell}} x_i^{*q}\Big)^{1/q}\leq\frac{1}{s^{1-1/q}}\sum_{j\in I_{\ell-1}}x^*_j.
$$
Summing up over $\ell\geq 3$ yields
$$
\sum_{\ell\geq 3} \Big(\sum_{i\in I_\ell} x_i^{*q}\Big)^{1/q}\leq\frac{1}{s^{1-1/q}}\sum_{\ell\geq 2}\sum_{j\in I_\ell}x^*_j.
$$
Since $x \in T_{\rho,s}^q\cap B_{\ell_q^n}$, there is an $S\subset [n]$ with $|S|=s$, such that $\|x_S\|_q\geq\frac{\rho}{s^{1-1/q}}\|x_{S^c}\|_1$. 
Therefore,
\begin{equation*}
\sum_{\ell\geq 2}\sum_{i\in I_\ell}x^*_i \leq\|x_{S^c}\|_1\leq \frac{s^{1-1/q}}{\rho}\|x_S\|_q \leq\frac{s^{1-1/q}}{\rho}\Big(\sum_{i\in
I_1} x_i^{*q}\Big)^{1/q},
\end{equation*}
where we used that in the worst case $S$ corresponds to $s$ largest absolute coefficients of $x$. It follows that
$$
\sum_{\ell\geq 3} \Big(\sum_{i\in I_\ell}
x_i^{*q}\Big)^{1/q}\leq\frac{1}{\rho}\Big(\sum_{i\in I_1} x_i^{*q}\Big)^{1/q}.
$$
Since $\|x\|_q\leq 1$, (\ref{eqn:splitDqs}) implies that $\|x\|_{D_s^q}\leq 2+\rho^{-1}$.
\end{proof}

We are now prepared to prove the main result of this article. To keep our exposition accessible, we first consider the special case of a standard Gaussian random matrix, i.e., a matrix with independent
normally distributed entries with mean zero and variance one. In Section~\ref{sec:generalDist} we generalize our result to a wider class of random matrices.
\begin{theorem}
\label{thm:main}
Let $A$ be an $m\ti n$ standard Gaussian matrix. Fix $1\leq p\leq \infty$, $q\geq 2$ and $0<\eta<1$. Suppose that
\begin{equation}\label{main:scale:m}
m \gtrsim s^{2-2/q}\log(en/s) + \log(\eta^{-1}).
\end{equation}
Then, with probability exceeding $1-\eta$ the following holds: for any $\hat{x}\in \C^n$ and $y=A\hat{x}+e$, where $\|e\|_p\leq \eps$, any solution $x^{\#}$ to (BPDN$_{p}$) satisfies
$$\|\hat{x}-x^{\#}\|_r\lesssim s^{1/r-1}\si_s(\hat{x})_1 + s^{1/r-1/q}\frac{\eps}{m^{1/p}},$$
for any $1\leq r\leq q$.  
\end{theorem}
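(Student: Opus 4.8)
The plan is to deduce the theorem from the $\ell_q$-robust null space property recalled above: it suffices to show that, under \eqref{main:scale:m}, the matrix $A$ satisfies with probability at least $1-\eta$ the $\ell_q$-robust null space property of order $s$ with a constant $\rho\in(0,1)$ fixed once and for all and with $\tau\lesssim m^{-1/p}$, with respect to the norm $\|\cdot\|_p$; the claimed reconstruction bound then follows verbatim from the recovery guarantee recalled above (cf.\ \cite[Theorem~4.25]{FoR13}), since then $C_\rho,D_\rho$ are absolute constants. To verify the null space property it is enough, by homogeneity and the standard reduction to the set $S$ of the $s$ largest coordinates of $x$, to prove the uniform lower bound
$$
\inf_{x\in T_{\rho,s}^q\cap S_{\ell_q^n}}\|Ax\|_p \geq c_1\, m^{1/p}
$$
for an absolute constant $c_1>0$. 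Indeed, by scale invariance this gives $\|Ax\|_p\geq c_1 m^{1/p}\|x\|_q$ for every $x\in T_{\rho,s}^q$; taking $S$ to be the top-$s$ set, the null space inequality is trivial when $\|x_S\|_q<\frac{\rho}{s^{1-1/q}}\|x_{S^c}\|_1$, while otherwise $x\in T_{\rho,s}^q$ and $\|x_S\|_q\leq\|x\|_q\leq c_1^{-1}m^{-1/p}\|Ax\|_p$, so that $\tau=c_1^{-1}m^{-1/p}$ works.

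I would obtain this lower bound from the small ball method of Lemma~\ref{lem:KoM13}, applied to the rows $a_1,\dots,a_m$ of $A$ (i.i.d.\ standard Gaussian vectors) and to the class $\cF=\{x\mapsto\langle a,x\rangle: x\in T_{\rho,s}^q\cap S_{\ell_q^n}\}$, using that $\|Ax\|_p^p=\sum_{i=1}^m|\langle a_i,x\rangle|^p$, so that $m^{-1}\|Ax\|_p^p$ is exactly the quantity bounded below in the lemma. The small ball probability is controlled from below using $q\geq2$: for $x\in S_{\ell_q^n}$ one has $\|x\|_2\geq\|x\|_q=1$, so $\langle a,x\rangle$ is a centred Gaussian of variance at least one, and Gaussian anti-concentration (applied to real and imaginary parts if $x$ is complex) yields $Q_{\cF}(2u)\geq c_0>0$ once $u$ is a suitably small absolute constant.

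The decisive step is the estimate of the Rademacher complexity $R_m(\cF)$, and it is exactly here that the exponent $p$ disappears: the symmetrized process is \emph{linear} in $x$, so no $p$-th power is ever involved. Using Lemma~\ref{lem:Dsq} to pass from $T_{\rho,s}^q\cap S_{\ell_q^n}$ to $(2+\rho^{-1})D_s^q$, the supremum becomes a dual norm,
$$
R_m(\cF)\leq\frac{2+\rho^{-1}}{m}\,\E\Big\|\sum_{i=1}^m\eps_i a_i\Big\|_*,
\qquad
\|v\|_*=\max_{|T|=s}\|v_T\|_{q'}=\Big(\sum_{i=1}^s (v_i^*)^{q'}\Big)^{1/q'},
$$
where $q'=q/(q-1)$ and $v^*$ is the nonincreasing rearrangement of $v$. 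Since $\sum_i\eps_i a_i$ has the same law as $\sqrt m\,g$ for a standard Gaussian vector $g$, this reduces to the Gaussian mean width $\E\|g\|_*$, which I would bound by
$$
\E\|g\|_*\leq\Big(\sum_{i=1}^s\E (g_i^*)^{q'}\Big)^{1/q'}\lesssim\Big(\sum_{i=1}^s(\log(en/i))^{q'/2}\Big)^{1/q'}\lesssim s^{1/q'}\sqrt{\log(en/s)}=s^{1-1/q}\sqrt{\log(en/s)},
$$
using the order-statistic estimate $g_i^*\lesssim\sqrt{\log(en/i)}$ together with the elementary summation bound $\sum_{i=1}^s(\log(en/i))^{q'/2}\lesssim s(\log(en/s))^{q'/2}$. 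This is the main obstacle, as it produces the sharp scaling and, crucially, shows that it is governed by $s^{2-2/q}\log(en/s)$ irrespective of $p$.

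Combining the three estimates of Lemma~\ref{lem:KoM13} with $u,\rho$ fixed absolute constants and $t\simeq\sqrt{\log(\eta^{-1})}$, the hypothesis \eqref{main:scale:m} forces $\tfrac4u R_m(\cF)\leq c_0/4$ and $t/\sqrt m\leq c_0/4$, so that with probability at least $1-\eta$ the bracket exceeds $c_0/2$; hence $\inf_{x\in T_{\rho,s}^q\cap S_{\ell_q^n}}\|Ax\|_p^p\geq m\,u^pc_0/2$, i.e.\ the desired bound with $c_1=u(c_0/2)^{1/p}\geq uc_0/2$. For $p=\infty$ the same conclusion follows from the finite-$p$ bound via $\|Ax\|_\infty\geq m^{-1/P}\|Ax\|_P$ for any finite $P$. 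This establishes the null space property with $\tau\lesssim m^{-1/p}$ and completes the proof.
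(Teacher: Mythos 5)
Your proof is correct and follows essentially the same route as the paper: reduction to the $\ell_q$-robust null space property, the small ball method of Lemma~\ref{lem:KoM13} on the cone $T_{\rho,s}^q\cap S_{\ell_q^n}$ via Lemma~\ref{lem:Dsq}, and the $p=\infty$ case by passage through a finite exponent. The only (harmless) variation is in bounding $R_m(\cF)$: you compute the dual norm $\E\bigl(\sum_{i=1}^s (g_i^*)^{q'}\bigr)^{1/q'}$ directly via Gaussian order statistics, whereas the paper reduces to the Euclidean Gaussian width of $\Si_s^2$ using $\|x\|_2\leq s^{1/2-1/q}\|x\|_q$ on sparse vectors; both yield the same bound $s^{1-1/q}\sqrt{\log(en/s)}$.
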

\begin{remark} The most interesting case in the above theorem is $q=2$. Then the optimal scaling $m \geq C s \log(en/s)$ implies that with high probability we obtain
the error bound
$$\|\hat{x}-x^{\#}\|_2\lesssim s^{-1/2}\si_s(\hat{x})_1 + m^{-1/p} \eps$$
for reconstruction via $\ell_p$-constrained basis pursuit.

For $q>2$ the scaling \eqref{main:scale:m} of $m$
in terms of the sparsity is near-optimal. Indeed, it is known \cite[p.~213]{pi85} that for
$q>2$ and $m\leq n-1$, the Gelfand width of $B_{\ell_1^n}$ in $\ell_q^n$
satisfies
$$d^m(B_{\ell_1^n},\ell_q^n)\geq d^m(B_{\ell_1^n},\ell_{\infty}^n)\geq c m^{-1/2}.$$
Thus, if we want to satisfy $\|\hat{x}-x^{\#}\|_q\lesssim
s^{1/q-1}\si_s(\hat{x})_1$ for all $\hat x\in\C^n$, then it is
necessary (cf. \cite[Theorem~10.4]{FoR13}) that $m^{-1/2}\lesssim
s^{1/q-1}$ or $m\gtrsim s^{2-2/q}$. Thus, up to possibly a
logarithmic factor we cannot improve the scaling of $m$ in terms of the
sparsity in Theorem~\ref{thm:main}.
\end{remark}
\begin{proof}[Proof of Theorem~\ref{thm:main}]
Suppose first that $p<\infty$. As was noted before, it suffices to
show that with probability at least $1-\eta$ the $\ell_q$-robust null
space property of order $s$ holds with respect to $\ell_p^n$-norm, with parameters $\rho$ and $\tau/m^{1/p}$ for some $0<\rho<1$ and $\tau>0$. Let us first observe that it suffices to show that 
\begin{equation}
\label{eqn:claim1}
\bP\Big(\inf_{x\in T_{\rho,s}^q\cap S_{\ell_q^n}} \|Ax\|_p \geq \frac{m^{1/p}}{\tau}\Big)\geq 1-\eta.
\end{equation}
Indeed, if this is true, then with probability at least $1-\eta$ the following holds: if $x\in \C^n$ satisfies $\|Ax\|_p< (m^{1/p}/\tau)\|x\|_q$ then $x/\|x\|_q$ is not in $T_{\rho,s}^q$. Therefore, for any $S\subset[n]$ with $|S|\leq s$, 
$$\|x_S\|_q \leq \frac{\rho}{s^{1-1/q}}\|x_{S^c}\|_1 \leq \frac{\rho}{s^{1-1/q}}\|x_{S^c}\|_1 + \frac{\tau}{m^{1/p}}\|Ax\|_p.$$
On the other hand, if $x\in \C^{n}$ satisfies $\|Ax\|_p\geq (m^{1/p}/\tau)\|x\|_q$, then trivially
$$\|x_S\|_q \leq \|x\|_q \leq \frac{\rho}{s^{1-1/q}}\|x_{S^c}\|_1 + \frac{\tau}{m^{1/p}}\|Ax\|_p.$$
To prove (\ref{eqn:claim1}), we write 
$$\inf_{x\in T_{\rho,s}^q\cap S_{\ell_q^n}} \frac{\|Ax\|_p}{m^{1/p}} = \inf_{x\in T_{\rho,s}^q\cap S^{n-1}_q} \Big(\frac{1}{m}\sum_{i=1}^m |\langle X_i,x\rangle|^p\Big)^{1/p},$$
where $X_i$ denotes the $i$-th row of $A$. To apply Lemma~\ref{lem:KoM13}, we estimate the small ball probability $Q_{\cF}$ and the expected Rademacher supremum $R_m(\cF)$ for the set of linear functions
$$\cF = \{\langle \cdot,x\rangle \ : \ x\in T_{\rho,s}^q\cap S_{\ell_q^n}\}.$$
Let $V=m^{-1/2}\sum_{i=1}^m \eps_i X_i$, then by Lemma~\ref{lem:Dsq}, 
\begin{align*}
R_m(\cF) & = m^{-1/2} \E\sup_{x\in T_{\rho,s}^q\cap S_{\ell_q^n}} \langle V,x\rangle \\
& \leq (2+\rho^{-1})m^{-1/2} \E\sup_{x\in D_s^q}\langle V,x\rangle\\
& = (2+\rho^{-1})m^{-1/2} \E\sup_{x\in \Si_s^q}\langle V,x\rangle,
\end{align*}
as $D_s^q$ is the convex hull of $\Si_s^q$. Since any $x\in \Si_s^q$ satisfies $\|x\|_2\leq s^{1/2-1/q}\|x\|_q$,  
$$R_m(\cF) \leq s^{1/2-1/q}(2+\rho^{-1})m^{-1/2} \E\sup_{x\in \Si_s^2}\langle V,x\rangle.$$
Since $X_1,\ldots,X_m$ are independent standard Gaussian vectors, so is $V$. Thus,
$$\E\sup_{x\in \Si_s^2}\langle V,x\rangle = w(\Si_s^2),$$
the Gaussian width of $\Si_s^2$. It is known that
$$w(\Si_s^2) \leq \sqrt{2s \log(en/s)} + \sqrt{s},$$ 
see e.g.~\cite[Lemma~4]{KaR15}, and we can conclude that
$$R_m(\cF) \leq c s^{1-1/q}(2+\rho^{-1})m^{-1/2}\sqrt{\log(en/s)}.$$

To estimate the small ball probability, note that, since $\|x\|_q \leq \|x\|_2$, for any $x\in S_{\ell_q^n}$,
\begin{align*}
\bP(|\langle X_i,x\rangle|\geq u) & = \bP\Big(\Big|\Big\langle
X_i,\frac{x}{\|x\|_2}\Big\rangle\Big|\geq \frac{u}{\|x\|_2}\Big)\\
 &\geq \bP\Big(\Big|\Big\langle X_i,\frac{x}{\|x\|_2}\Big\rangle\Big|\geq
u\Big) = \bP(|g|\geq u),
\end{align*}
where $g$ is a standard Gaussian real-valued random variable. Therefore,
$$Q_{\cF}(2u)\geq \bP(|g|\geq 2u).$$
Now pick $u_*$ small enough so that the right hand side is bigger than $1/2$, say. Pick $m$ large enough so that
$$\max\Big\{\frac{4c(2+\rho^{-1})s^{1-1/q}\sqrt{\log(en/s)}}{u_*\sqrt{m}}, \frac{\sqrt{\log(2/\eta)}}{\sqrt{2m}}\Big\}\leq 1/8.$$
By Lemma~\ref{lem:KoM13} we can now conclude that (\ref{eqn:claim1}) holds with $\tau=4^{1/p}/u_*$.

Finally, let $p=\infty$. Since $\|Ax\|_{\log m}\leq e\|Ax\|_{\infty}$,
\begin{equation*}
\bP\Big(\inf_{x\in T_{\rho,s}^q\cap S_{\ell_q^n}} \|Ax\|_{\infty}  \geq \frac{1}{\tau}\Big) \geq \bP\Big(\inf_{x\in T_{\rho,s}^q\cap S_{\ell_q^n}} \|Ax\|_{\log m} \geq \frac{e}{\tau}\Big).
\end{equation*}
Thus, in this case the result follows from our proof for $p=\log m$.     
\end{proof}

 \section{Application to quantized compressed sensing}
 \label{sec:quantizedCS}

Consider the situation where we quantize noiseless compressed sensing measurements using a uniform scalar quantization scheme. That is, we observe $y=Q_{\theta}(A \hat{x})$, where $Q_{\theta}:\R^m\rightarrow (\theta\bZ + \theta/2)^m$ is the uniform quantizer with bin width $\theta$ defined by $Q_{\theta}(z) = \big(\theta\lfloor z_i/\theta\rfloor + \theta/2\big)_{i=1}^m$. Graphically, we divide $\R^m$ into hypercubes (or `bins') with side length $\theta$ and map $A\hat{x}$ to the center of the hypercube in which it resides. We view the quantized measurements as noisy linear measurements 
$y=A \hat{x} + e$, by setting $e=Q_{\theta}(A \hat{x}) - A \hat{x}$. 
Since the bin width of the quantization is $\theta$, we clearly have $\|e\|_{\infty}\leq \theta/2$. 

To obtain a satisfactory reconstruction $x^{\#}$ of the signal, we
would like to ensure that it is \emph{quantization consistent}. This
means that we require that $y=Q_{\theta}(A x^{\#})$. If we define 
$$B_{\theta}=\{z\in\R^m \ : \ -\theta/2\leq z_i<\theta/2, \ i=1,\ldots,m\},$$ 
then $x^{\#}$ is quantization consistent if and only if $A x^{\#}-y \in B_{\theta}$. Thus, we should solve the following \emph{quantization consistent basis pursuit} program
\begin{equation}\label{QCBP}
\min_{z\in \R^n} \|z\|_1 \quad \mbox{ subject to } \quad Az-y \in B_{\theta} \tag{$\operatorname{QCBP}$}.
\end{equation}
This program is strongly related to (BPDN$_{\infty}$) with $\eps=\theta/2$ (which correspond to taking the closure $\overline{B_{\theta}}$ instead of $B_{\theta}$ in (QCBP)). In fact, either 1) a minimizer for (QCBP) exists, this is then also a minimizer for (BPDN$_{\infty}$), or 2) no minimizer exists, in which case every minimizer of (BPDN$_{\infty}$) is quantization inconsistent. In particular, Theorem~\ref{thm:main} implies the following statement.
\begin{corollary}
\label{cor:quantization}
Let $A$ be an $m\ti n$ standard Gaussian matrix and $0<\eta<1$. Suppose that
$$m\gtrsim s\log(en/s) + \log(\eta^{-1}).$$
Then, with probability exceeding $1-\eta$ the following holds: for any $\hat{x}\in \R^n$ and quantized measurements $y=Q_{\theta}(A\hat{x})$, any solution $x^{\#}$ to (QCBP) is a quantization consistent reconstruction of $\hat{x}$ and satisfies the error bound
$$\|\hat{x}-x^{\#}\|_2 \lesssim s^{-1/2}\si_s(\hat{x})_1 + \theta.$$ 
\end{corollary}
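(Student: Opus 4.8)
The plan is to derive the corollary from Theorem~\ref{thm:main} by specializing the free parameters to $p=\infty$, $q=2$ and $r=2$, and then to translate the resulting $\ell_\infty$-constrained recovery guarantee into a statement about quantization consistency.

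First I would check that the measurement hypothesis matches. For $q=2$ the scaling \eqref{main:scale:m} becomes $m\gtrsim s^{2-2/q}\log(en/s)+\log(\eta^{-1})=s\log(en/s)+\log(\eta^{-1})$, i.e.\ exactly the condition assumed in the corollary. Hence Theorem~\ref{thm:main} applies and, on an event of probability at least $1-\eta$, the following holds simultaneously for every $\hat{x}\in\R^n$: whenever $y=A\hat{x}+e$ with $\|e\|_\infty\leq\eps$, any minimizer $x^{\#}$ of (BPDN$_\infty$) satisfies, on taking $r=q=2$,
\[
\|\hat{x}-x^{\#}\|_2\lesssim s^{-1/2}\si_s(\hat{x})_1 + \eps,
\]
where I have used that $s^{1/r-1}=s^{-1/2}$, $s^{1/r-1/q}=1$ and $m^{1/p}=1$ for the chosen exponents. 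The choice $p=\infty$ is what makes this clean: the quantization error is naturally controlled in $\ell_\infty$ uniformly in $m$, and for $p=\infty$ the noise term of Theorem~\ref{thm:main} carries no vanishing prefactor.

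Next I would connect this with the quantized measurements. Writing $e=Q_{\theta}(A\hat{x})-A\hat{x}$, we have $\|e\|_\infty\leq\theta/2$ as observed before the corollary, so with $\eps=\theta/2$ the data $y=Q_\theta(A\hat{x})=A\hat{x}+e$ falls precisely into the framework above. The recovery bound then gives $\|\hat{x}-x^{\#}\|_2\lesssim s^{-1/2}\si_s(\hat{x})_1+\theta/2\lesssim s^{-1/2}\si_s(\hat{x})_1+\theta$ for every minimizer $x^{\#}$ of (BPDN$_\infty$).

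Finally I would use the dichotomy recorded just before the corollary to pass from (BPDN$_\infty$) to (QCBP). Since each coordinate of $A\hat{x}-y=A\hat{x}-Q_\theta(A\hat{x})$ lies in $[-\theta/2,\theta/2)$, the signal $\hat{x}$ is feasible for (QCBP), so its constraint set is nonempty; moreover the constraint $Az-y\in B_\theta$ is exactly quantization consistency $Q_\theta(Az)=y$, so any solution of (QCBP) is automatically quantization consistent. If a minimizer of (QCBP) exists, it is also a minimizer of (BPDN$_\infty$) and therefore inherits the error bound above; if none exists, the claim is vacuous. The only step that genuinely requires care is this transition between the half-open constraint box $B_\theta$ of (QCBP) and the closed box $\overline{B_\theta}$ of (BPDN$_\infty$), but that subtlety has already been dealt with in the paragraph preceding the corollary, whence the result follows.
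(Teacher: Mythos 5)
Your proposal is correct and follows exactly the route the paper intends: specialize Theorem~\ref{thm:main} to $p=\infty$, $q=r=2$ with $\eps=\theta/2$, and use the dichotomy between (QCBP) and (BPDN$_\infty$) stated just before the corollary (the paper gives no separate proof, treating this preceding discussion as the argument). The only point worth noting is that your appeal to the dichotomy is not strictly needed, since a (QCBP)-minimizer is feasible for (BPDN$_\infty$) with $\|x^{\#}\|_1\leq\|\hat{x}\|_1$, which already suffices for the null-space-property error bound; but as written your argument is complete and matches the paper.
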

Comparing Corollary~\ref{cor:quantization} to the performance of the usual basis pursuit denoising, (BPDN$_{2}$), we can still reconstruct with the optimal number of measurements, but the reconstruction error does not decay beyond (a constant multiple of) the quantization precision $\theta$.

Let us compare to the work in \cite{JHF11}, where the authors introduced and analyzed (BPDN$_{p}$) with $2\leq p<\infty$ for the purpose of recovering a signal from quantized measurements (as described above). They did not obtain a result for $p=\infty$, but the idea is that the reconstruction becomes more consistent as $p\rightarrow\infty$. A main result in \cite{JHF11} shows the following, via an (RIP$_{p,2}$)-based analysis. Assume that the error vector $e$ consists of i.i.d.\ $U([-\theta/2,\theta/2])$ random variables, that is we assume that the quantization error is uniformly distributed in each bin (this is called the \emph{high resolution assumption}). With probability at least $1-e^{-2t^2}$, 
$$\|e\|_p\leq \eps_p := \frac{\theta}{2(p+1)^{1/p}} (m+t(p+1)\sqrt{m})^{1/p}.$$
This suggests to try to recover $\hat{x}$ via (BPDN$_{p}$) with $\eps=\eps_p$.  Let $A$ be an $m\ti n$ standard Gaussian matrix with
\begin{equation}
\label{eqn:measRIPp2}
m\gtrsim (ps\log(en\sqrt{p}/s) + p\log(\eta^{-1}))^{p/2},
\end{equation} 
then with probability at least $1-\eta$, for any $\hat{x}\in\R^n$ the reconstruction $x^{\#}$ via (BPDN$_{p}$) with $y=Q_{\theta}(A\hat{x})$ and $\eps=\eps_p$ satisfies
$$\|\hat{x}-x^{\sharp}\|_2 \lesssim s^{-1/2}\si_s(\hat{x})_1 + \frac{\theta}{\sqrt{p+1}}.$$
Compared to Corollary~\ref{cor:quantization}, the reconstruction error
due to quantization error shows decay with $p$. Note, however, that
the value we can take for $p$ is implicitly limited by
(\ref{eqn:measRIPp2}), and in particular we cannot set $p=\infty$ so
that $x^{\sharp}$ is not guaranteed to be quantization
consistent. Moreover, when $p > 2$, the number of required measurements grows faster than linear in the sparsity. In fact, it grows exponentially in $p$, as opposed to the minimal number of
measurements needed in Corollary~\ref{cor:quantization}.
 
 \section{Generalization to different distributions}
 \label{sec:generalDist}
 
From the proof of Theorem~\ref{thm:main} we extract the following statement, which allows us to generalize our recovery result (as well as Corollary~\ref{cor:quantization}) to a variety of random matrices beyond the Gaussian case, while retaining the same (optimal) recovery guarantees as for a standard Gaussian matrix.
\begin{theorem}
\label{thm:genDistributions}
Let $A$ be an $m\ti n$ random matrix with i.i.d.\ rows $X_1,\ldots,X_m$ which are distributed as $X$. Suppose that for some $u_*>0$ and $\beta>0$,
\begin{equation}
\label{eqn:SBass}
\bP\big[|\langle X,x\rangle|\geq u_*\big]\geq \beta \qquad \operatorname{for \ all \ } x\in S_{\ell_2^n},
\end{equation}
and, if $V=m^{-1/2}\sum_{i}\varepsilon_i X_i$ then for some $\ka>0$,
\begin{equation*}
\E\sup_{x\in \Si_s^2}\langle V,x\rangle = \E\Big(\sum_{i=1}^s (V_i^*)^{2}\Big)^{1/2}\leq \ka \sqrt{s\log(en/s)},
\end{equation*}
where $V_1^*\geq\ldots\geq V_n^*$ is the nonincreasing rearrangement of $V$. Fix $1\leq p\leq \infty$ and $q\geq 2$. If 
$$m\gtrsim \max\Big\{\frac{\ka^2}{u_*^2\beta^2} s^{2-2/q}\log(en/s),\frac{\log(\eta^{-1})}{\beta^2}\Big\},$$
then with probability at least $1-\eta$ the following holds: for any $\hat{x}\in \C^n$ and $y=A\hat{x}+e$, where $\|e\|_p\leq \eps$, any solution $x^{\#}$ to (BPDN$_{p}$) satisfies
$$\|\hat{x}-x^{\#}\|_r\lesssim s^{1/r-1}\si_s(\hat{x})_1 + s^{1/r-1/q}\frac{\eps}{\beta^{1/p}u_*m^{1/p}},$$
for any $1\leq r\leq q$.  
\end{theorem}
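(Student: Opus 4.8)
The plan is to follow the proof of Theorem~\ref{thm:main} line by line, replacing the two Gaussian-specific estimates---the bound on the Gaussian width of $\Si_s^2$ and the one-dimensional Gaussian tail bound---by the two hypotheses \eqref{eqn:SBass} and the Rademacher bound. Exactly as there, for $p<\infty$ it suffices to produce $0<\rho<1$ and $\tau>0$ such that the analogue of \eqref{eqn:claim1} holds, i.e.
\[
\bP\Big(\inf_{x\in T_{\rho,s}^q\cap S_{\ell_q^n}}\|Ax\|_p\geq \frac{m^{1/p}}{\tau}\Big)\geq 1-\eta,
\]
since this gives the $\ell_q$-robust null space property of order $s$ with constants $\rho$ and $\tau/m^{1/p}$ relative to the $\ell_p^n$-norm, after which the error bound follows from \cite[Theorem~4.25]{FoR13}. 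I would rewrite the infimum as $\inf_x\big(\tfrac1m\sum_{i=1}^m|\langle X_i,x\rangle|^p\big)^{1/p}$ and apply Lemma~\ref{lem:KoM13} to the class $\cF=\{\langle\cdot,x\rangle : x\in T_{\rho,s}^q\cap S_{\ell_q^n}\}$.

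For the small ball term I would use $q\geq 2$, so that every $x\in S_{\ell_q^n}$ satisfies $\|x\|_2\geq\|x\|_q=1$; then by homogeneity and \eqref{eqn:SBass} applied to $x/\|x\|_2\in S_{\ell_2^n}$,
\[
\bP(|\langle X,x\rangle|\geq u_*)=\bP\Big(\Big|\Big\langle X,\tfrac{x}{\|x\|_2}\Big\rangle\Big|\geq\tfrac{u_*}{\|x\|_2}\Big)\geq\beta ,
\]
giving $Q_{\cF}(u_*)\geq\beta$, and the choice $u=u_*/2$ yields $Q_{\cF}(2u)\geq\beta$. For the Rademacher term, Lemma~\ref{lem:Dsq}, the fact that $D_s^q$ is the convex hull of $\Si_s^q$, and the inequality $\|x\|_2\leq s^{1/2-1/q}\|x\|_q$ on $\Si_s^q$ reduce the supremum to $\Si_s^2$, and the second hypothesis then gives
\[
R_m(\cF)\leq (2+\rho^{-1})s^{1/2-1/q}m^{-1/2}\,\E\sup_{x\in\Si_s^2}\langle V,x\rangle\leq (2+\rho^{-1})\ka\, s^{1-1/q}m^{-1/2}\sqrt{\log(en/s)} .
\]

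I would then apply Lemma~\ref{lem:KoM13} with $u=u_*/2$ and $t=\sqrt{\tfrac12\log(2/\eta)}$, so the failure probability is $\eta$. The assumption $m\gtrsim \ka^2u_*^{-2}\beta^{-2}s^{2-2/q}\log(en/s)$ forces $\tfrac4u R_m(\cF)\leq\beta/4$, while $m\gtrsim\beta^{-2}\log(\eta^{-1})$ forces $t/\sqrt{m}\leq\beta/4$; hence $Q_{\cF}(2u)-\tfrac4u R_m(\cF)-t/\sqrt m\geq\beta/2$, and Lemma~\ref{lem:KoM13} gives, on an event of probability at least $1-\eta$,
\[
\inf_{x\in T_{\rho,s}^q\cap S_{\ell_q^n}}\Big(\tfrac1m\sum_{i=1}^m|\langle X_i,x\rangle|^p\Big)^{1/p}\geq \frac{u_*}{2}\Big(\frac{\beta}{2}\Big)^{1/p} .
\]
Multiplying by $m^{1/p}$ this is the displayed bound with $\tau=2\cdot 2^{1/p}u_*^{-1}\beta^{-1/p}$, and tracking this $\tau$ through the null space property yields precisely the error bound with factor $(\beta^{1/p}u_*m^{1/p})^{-1}$. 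The case $p=\infty$ is handled exactly as in Theorem~\ref{thm:main}: since $\|Ax\|_{\log m}\leq e\|Ax\|_\infty$, the $\ell_\infty$-bound follows from the case $p=\log m$.

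The only work beyond Theorem~\ref{thm:main} is bookkeeping: one must check that the two hypotheses slot cleanly into the small-ball and Rademacher slots of Lemma~\ref{lem:KoM13} and that the constants $\ka,u_*,\beta$ propagate correctly into the measurement condition and the error term. I expect no genuine obstacle, since the abstract hypotheses were reverse-engineered from exactly the two places where the Gaussian structure was used; the one point worth checking carefully is that \eqref{eqn:SBass}, stated on $S_{\ell_2^n}$, survives the passage to $S_{\ell_q^n}$, which is precisely where $q\geq 2$ enters.
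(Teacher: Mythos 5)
Your proposal is correct and is essentially identical to the paper's intended argument: the paper gives no separate proof of Theorem~\ref{thm:genDistributions}, stating only that it is extracted from the proof of Theorem~\ref{thm:main}, and your line-by-line substitution of hypothesis \eqref{eqn:SBass} for the Gaussian tail bound and of the assumed Rademacher bound for the Gaussian width estimate, with the resulting constants $u_*(\beta/2)^{1/p}/2$ and the two branches of the measurement condition, is exactly that extraction.
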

To verify the small ball condition (\ref{eqn:SBass}), it is often useful to apply the Paley-Zygmund inequality
\begin{equation}
\label{eqn:P-Z}
\bP(\zeta>t) \geq \frac{(\E\zeta -t)^2}{\E\zeta^2}, \qquad 0\leq t\leq \E\zeta,
\end{equation}
which holds for any nonnegative random variable $\zeta$. In particular, if $X$ is a random vector with independent, mean-zero entries $\xi_1,\ldots,\xi_n$ which have variance $\si^2$ and fourth moment bounded by $\mu^4$, then
\begin{equation}
\label{eqn:P-Zindsum}
\bP(|\langle X,x\rangle|>t) \geq \frac{(\si^2-t^2)^2}{\mu^4}, \qquad 0\leq t\leq \si,
\end{equation}
whenever $\|x\|_2=1$. We refer to \cite[Lemmas 7.16 and 7.17]{FoR13} for details.\par
Let us now verify the conditions of Theorem~\ref{thm:genDistributions} for some concrete classes of matrices.
\begin{corollary}
\label{cor:subgaussian}
Suppose that the rows of $A$ are i.i.d.\ copies of $X$, where $X$ is 
\begin{itemize}
\item sub-isotropic, i.e., $\E \langle X,x\rangle^2 \geq \|x\|_2^2$ for all $x \in \C^n$;
\item 1-subgaussian, i.e., $\E\exp(t\langle X,x\rangle)\leq \exp(t^2)$ for all $x\in \C^n$ with $\|x\|_2\leq 1$ and $t\in \R$.
\end{itemize}
If $m\gtrsim s^{2-2/q}\log(en/s) + \log(\eta^{-1})$ then the conclusion of Theorem~\ref{thm:main} holds. 
\end{corollary}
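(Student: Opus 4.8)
The plan is to verify that a matrix whose rows are distributed as $X$ satisfies the two hypotheses of Theorem~\ref{thm:genDistributions}, namely the small ball condition (\ref{eqn:SBass}) and the bound on the expected Rademacher supremum, with constants $u_*$, $\beta$ and $\kappa$ that are all \emph{absolute} (depending only on the sub-isotropy and subgaussianity parameters, here normalized to $1$). Once this is achieved, the requirement on $m$ in Theorem~\ref{thm:genDistributions} collapses to $m \gtrsim s^{2-2/q}\log(en/s) + \log(\eta^{-1})$, and since $\beta^{1/p}u_* \gtrsim 1$ uniformly over $1 \le p \le \infty$, the error term $s^{1/r-1/q}\eps/(\beta^{1/p}u_* m^{1/p})$ is comparable to the one appearing in Theorem~\ref{thm:main}. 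Thus the conclusion follows by direct invocation of Theorem~\ref{thm:genDistributions}.

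For the small ball condition I would apply the Paley--Zygmund inequality (\ref{eqn:P-Z}) to $\zeta = |\langle X, x\rangle|^2$ for fixed $x \in S_{\ell_2^n}$. Sub-isotropy gives $\E\zeta \ge 1$. The $1$-subgaussian assumption yields, via the usual Chernoff bound, a subgaussian tail of the form $\bP(|\langle X, x\rangle| > t) \le 2e^{-t^2/4}$, whence the fourth moment $\E\zeta^2 = \E|\langle X,x\rangle|^4$ is bounded by an absolute constant $\mu^4$. Choosing the threshold $t = 1/2 \le \E\zeta$ in (\ref{eqn:P-Z}) then gives
$$\bP\big(|\langle X, x\rangle| > 1/\sqrt 2\big) = \bP(\zeta > 1/2) \ge \frac{(\E\zeta - 1/2)^2}{\E\zeta^2} \ge \frac{1}{4\mu^4},$$
so (\ref{eqn:SBass}) holds with $u_* = 1/\sqrt 2$ and $\beta = 1/(4\mu^4)$, both absolute.

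The main work is the Rademacher supremum bound, where one must relate the symmetrized vector $V = m^{-1/2}\sum_i \eps_i X_i$ to the Gaussian width $w(\Si_s^2)$. Conditioning on the $X_i$ produces a subgaussian bound that is itself random, so instead I would show directly that the process $x \mapsto \langle V, x\rangle$ is subgaussian with respect to the Euclidean metric with an absolute constant: writing $z = x - y$, the independent terms $\eps_i \langle X_i, z\rangle$ are symmetric and $1$-subgaussian with $\|\langle X_i, z\rangle\|_{\psi_2} \lesssim \|z\|_2$, so by additivity of squared $\psi_2$-norms over independent summands $\|\langle V, z\rangle\|_{\psi_2} \lesssim m^{-1/2}\big(\sum_i \|z\|_2^2\big)^{1/2} = \|z\|_2$. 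Talagrand's comparison theorem (or Dudley's entropy integral) then gives $\E\sup_{x\in \Si_s^2}\langle V, x\rangle \lesssim w(\Si_s^2)$, and the estimate $w(\Si_s^2) \le \sqrt{2s\log(en/s)} + \sqrt s \lesssim \sqrt{s\log(en/s)}$ cited earlier yields the required bound with an absolute $\kappa$.

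The step I expect to be most delicate is precisely this last comparison: one must argue that the $\psi_2$-increments of $V$ are controlled in the Euclidean metric \emph{uniformly} over $x$, rather than only after conditioning on the $X_i$, so that a single Gaussian comparison applies to the whole process. Everything else amounts to routine bookkeeping, tracking that the extracted constants $u_*$, $\beta$, $\kappa$ depend only on the normalized sub-isotropy and subgaussianity parameters and hence are absolute.
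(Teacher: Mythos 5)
Your proposal is correct and follows essentially the same route as the paper: both verify the two hypotheses of Theorem~\ref{thm:genDistributions} by applying the Paley--Zygmund inequality to $|\langle X,x\rangle|^2$ (sub-isotropy bounding the numerator, subgaussianity the fourth moment in the denominator) and by showing that $\langle V,\cdot\rangle$ has subgaussian increments with an absolute constant so that chaining over $\Si_s^2$ gives $\lesssim\sqrt{s\log(en/s)}$. The only cosmetic difference is that the paper runs Dudley's integral with an explicit covering-number bound for $\Si_s^2$ rather than comparing to the Gaussian width, and your extra remark that $\beta^{1/p}u_*\gtrsim 1$ uniformly in $p$ is a correct piece of bookkeeping the paper leaves implicit.
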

\begin{proof}
We verify the two conditions of Theorem~\ref{thm:genDistributions}. To verify (\ref{eqn:SBass}), we use (\ref{eqn:P-Z}) for $|\langle X,x\rangle|^2$ to get
$$\bP(|\langle X,x\rangle|>u)\geq \frac{(\E|\langle X,x\rangle|^2 - u^2)^2}{\E|\langle X,x\rangle|^4} \geq (1-u^2)^2,$$
whenever $0\leq u\leq 1$ and $\|x\|_2=1$. In the last inequality, we used that $X$ is sub-isotropic and subgaussian.\par
To verify the second condition, note that by assumption, the random variable $\langle X_i,x-y\rangle$ is 2-subgaussian for any $x,y\in \Si_s^2$. Therefore $V=m^{-1/2}\sum_{i}\varepsilon_i X_i$ is a $4$-subgaussian random vector (see e.g.\ \cite[Theorem 7.27]{FoR13}). By Dudley's inequality (see e.g.\ \cite[Theorem 8.23]{FoR13}),
$$\E\sup_{x\in \Si_s^2}\langle V,x\rangle \lesssim \int_0^1 [\log(\cN(\Si_s^2,\|\cdot\|_2,u)]^{1/2} \ du.$$
Since for any $u>0$
\begin{equation*}
\cN(\Si_s^2,\|\cdot\|_2,u) \leq {n\choose s} \max_{S\subset [n]: \ |S|\leq s} \cN(B_S,\|\cdot\|_2,u) \leq (en/s)^s (1+(2/u))^s,
\end{equation*}
we conclude that
\begin{equation*}
 \E\sup_{x\in \Si_s^2}\langle V,x\rangle \lesssim \sqrt{s\log(en/s)} + \sqrt{s} \int_0^1 [\log(1+(2/u))]^{1/2} \ du \lesssim \sqrt{s\log(en/s)} .
\end{equation*}
\end{proof}
The following result concerns matrices with i.i.d.\ entries.
\begin{corollary}
\label{cor:condEntries}
Suppose that $X=(\xi_1,\ldots,\xi_n)$, with the $\xi_i$ independent, mean-zero and identically distributed as $\xi$. Suppose that for some $\la>0$ and $\al\geq 1/2$,
\begin{equation}
\label{eqn:momentLogn}
(\E|\xi|^r)^{1/r} \leq \la r^{\al}, \qquad \operatorname{for \ all \ } 2\leq r \leq \log n.
\end{equation}
and that (\ref{eqn:SBass}) holds. If
\begin{equation*}
m \gtrsim \max\Big\{\frac{\la^2 e^{4\al-2}}{u_*^2\beta^2}s^{2-2/q}\log(en/s),\frac{\log(\eta^{-1})}{\beta^2},(\log(n))^{2\al-1}\Big\},
\end{equation*}
then the conclusion of Theorem~\ref{thm:genDistributions} holds. 
\end{corollary}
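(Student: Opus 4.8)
The plan is to invoke Theorem~\ref{thm:genDistributions}. Since the small ball estimate \eqref{eqn:SBass} is assumed outright, the only hypothesis left to check is the bound on the expected Rademacher supremum,
$$\E\sup_{x\in\Sigma_s^2}\langle V,x\rangle = \E\Big(\sum_{i=1}^s (V_i^*)^2\Big)^{1/2}\leq\kappa\sqrt{s\log(en/s)},$$
after which one reads off an admissible value of $\kappa$. Matching the measurement bound of Theorem~\ref{thm:genDistributions} against the one claimed in the corollary, I expect $\kappa\lesssim\lambda e^{2\alpha-1}$ to be the right target. Here $V=m^{-1/2}\sum_{i=1}^m\varepsilon_i X_i$ has i.i.d.\ coordinates $V_j=m^{-1/2}\sum_{i=1}^m\varepsilon_i\xi_{i,j}$, each a normalized symmetric sum of $m$ independent copies of $\xi$.

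First I would reduce the expected norm of the $s$ largest coordinates to a single moment of a single coordinate. For any $r\geq 2$, the power mean inequality applied to the top $s$ coordinates gives
$$\Big(\sum_{i=1}^s (V_i^*)^2\Big)^{1/2}\leq s^{1/2-1/r}\Big(\sum_{i=1}^s (V_i^*)^r\Big)^{1/r}\leq s^{1/2-1/r}\Big(\sum_{j=1}^n |V_j|^r\Big)^{1/r}.$$
Taking expectations, invoking Jensen's inequality (concavity of $t\mapsto t^{1/r}$) and the fact that the $V_j$ are identically distributed, one obtains
$$\E\Big(\sum_{i=1}^s (V_i^*)^2\Big)^{1/2}\leq s^{1/2-1/r}n^{1/r}(\E|V_1|^r)^{1/r}=s^{1/2}(n/s)^{1/r}(\E|V_1|^r)^{1/r}.$$
Choosing $r\asymp\log(en/s)$ (never exceeding $\log n$, so that the moment hypothesis \eqref{eqn:momentLogn} applies) makes $(n/s)^{1/r}\leq e$, and the whole problem collapses to an $L_r$-bound on the single symmetrized coordinate $V_1$.

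The remaining step is to estimate $(\E|V_1|^r)^{1/r}$. Here I would symmetrize and apply a moment inequality of Rosenthal/Khintchine type for sums of independent symmetric random variables, then feed in the hypothesis $(\E|\xi|^r)^{1/r}\leq\lambda r^\alpha$ together with its $r=2$ consequence $(\E\xi^2)^{1/2}\leq\lambda 2^\alpha$. This yields two contributions: a ``Gaussian'' term of order $\sqrt r\,(\E\xi^2)^{1/2}\lesssim\lambda 2^\alpha\sqrt r$, which at $r\asymp\log(en/s)$ produces exactly the main term $\lesssim\lambda e^{2\alpha-1}\sqrt{s\log(en/s)}$ (using $2^\alpha\lesssim e^{2\alpha-1}$ for $\alpha\geq 1/2$), and a heavy-tailed remainder of order $m^{1/r-1/2}$ times a power of $r$ and $\lambda r^\alpha$. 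The latter is where the extra condition $m\gtrsim(\log n)^{2\alpha-1}$ enters: it is precisely what pushes this remainder below the Gaussian term, so that the total is $\lesssim\lambda e^{2\alpha-1}\sqrt{s\log(en/s)}$, i.e.\ $\kappa\lesssim\lambda e^{2\alpha-1}$. Inserting this $\kappa$ into Theorem~\ref{thm:genDistributions} reproduces the stated measurement bound, with $\kappa^2\lesssim\lambda^2 e^{4\alpha-2}$.

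The hard part will be this last step: selecting a moment inequality sharp enough that the power of $r$ multiplying the heavy-tailed remainder is small, so that $m\gtrsim(\log n)^{2\alpha-1}$—rather than a larger power of $\log n$—already dominates it. A secondary, more bookkeeping-type difficulty is to carry out the optimization over $r$ while respecting the ceiling $r\leq\log n$ imposed by \eqref{eqn:momentLogn}, yet still recovering the sharper factor $\log(en/s)$ (not $\log n$) in the final width bound, and to track the $\alpha$-dependent constants so that they assemble into the clean factor $e^{4\alpha-2}$.
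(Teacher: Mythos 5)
Your route is the same as the paper's: condition on the signs, observe that the coordinates $V_j=m^{-1/2}\sum_i\eps_i\xi_{ij}$ are identically distributed and mean-zero, show their first $\log n$ moments grow subgaussianly, convert that into the bound $\E\bigl(\sum_{i\le s}(V_i^*)^2\bigr)^{1/2}\lesssim e^{2\al-1}\la\sqrt{s\log(en/s)}$, and feed the resulting $\ka$ into Theorem~\ref{thm:genDistributions}. The paper disposes of the two technical steps by citation: the moment growth of $V_j$ is \cite[Lemma~2.8]{LeM14} and the order-statistics bound is (the proof of) \cite[Lemma~6.5]{Sh14_1}. Your treatment of the second step --- $\bigl(\sum_{i\le s}(V_i^*)^2\bigr)^{1/2}\le s^{1/2-1/r}\bigl(\sum_j|V_j|^r\bigr)^{1/r}$, then Jensen and identical distribution to get $s^{1/2}(n/s)^{1/r}(\E|V_1|^r)^{1/r}$ with $r\asymp\log(en/s)$ --- is complete, correct, and arguably cleaner than the cited argument, since it never uses independence of the $V_j$ (only that they are equidistributed), so conditioning on the Rademacher sequence is not even needed for this part.

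The genuine gap is exactly where the paper invokes \cite[Lemma~2.8]{LeM14}: you assert, but do not prove, that $(\E|V_1|^r)^{1/r}\lesssim e^{2\al-1}\la\sqrt r$ for all $2\le r\le\log n$ once $m\gtrsim(\log n)^{2\al-1}$. This is not a routine consequence of a generic Rosenthal/Khintchine inequality. With the sharp Rosenthal bound for mean-zero sums, $\|\sum_iY_i\|_r\le C\bigl(\sqrt r\,(\sum_i\E Y_i^2)^{1/2}+\tfrac{r}{\log r}(\sum_i\E|Y_i|^r)^{1/r}\bigr)$, the remainder after normalizing is $\tfrac{r}{\log r}\,m^{1/r-1/2}\|\xi\|_r\le\tfrac{r^{1+\al}}{\log r}\la\,m^{1/r-1/2}$, and at $r=\log n$, $m=(\log n)^{2\al-1}$ this exceeds $e^{2\al-1}\la\sqrt r$ by a factor of order $r/\log r$; one needs either a finer moment inequality (Lata{\l}a's two-sided estimate for i.i.d.\ sums) or to exploit the interplay with the other constraints on $m$ to close this. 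You correctly flag this as ``the hard part,'' but the proposal as written does not supply the inequality that makes it work, and this single estimate is the entire technical content of the corollary beyond Theorem~\ref{thm:genDistributions}. To complete the argument you should either cite \cite[Lemma~2.8]{LeM14} as the paper does, or carry out the Lata{\l}a-type computation explicitly; everything else in your write-up stands.
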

Specializing Corollary~\ref{cor:condEntries} to $p=q=2$, we obtain a result similar to \cite[Theorem A]{LeM14}. Let us compare the two results. On the one hand, our result gives a better power in the $\log(n)$ factor ($2\al-1$ versus $4\al-1$) and improved (actually optimal) dependence on the failure probability $\eta$. On the other hand, \cite[Theorem A]{LeM14} does not require independence of the $\xi_i$ and needs only a small ball assumption on the set of sparse vectors $\Si_s$ (rather than one on the larger set $T_{\rho,s}^2\cap S_{\ell_2^n}$ used here).
\begin{proof}
We fix the randomness in the Rademacher sequence $(\eps_i)$. The random variables $V_j=m^{-1/2}\sum_{i=1}^m \eps_i X_{ij}$ are then independent and mean-zero. Since $X_{ij}$ satisfies (\ref{eqn:momentLogn}), \cite[Lemma 2.8]{LeM14} shows that if $m\geq (\log(n))^{\max\{2\al-1,1\}}$, then for any $2\leq p\leq \log(n)$
$$(\E|V_j|^p)^{1/p} \lesssim e^{2\al-1}\la\sqrt{p},$$
i.e., the first $\log(n)$ moments show subgaussian behaviour. Therefore, (the proof of) \cite[Lemma 6.5]{Sh14_1} shows that
$$\E\Big(\sum_{i=1}^s (V_i^*)^{2}\Big)^{1/2}\lesssim e^{2\al-1}\la\sqrt{s\log(en/s)}.$$
The result is now immediate from Theorem~\ref{thm:genDistributions}.  
\end{proof}
\begin{example}
\label{exa:iid}
Let $A$ be a random matrix with i.i.d. entries $A_{ij}$. Below we list some conditions under which the conclusion of Theorem~\ref{thm:main} is valid. Note that if we measure the reconstruction error in $\ell_2$ (i.e., $q=2$), then the stated lower bounds always coincide with the optimal number of measurements.
\begin{enumerate}
\renewcommand{\labelenumi}{(\roman{enumi})} 
\item If the $A_{ij}$ are random signs (i.e.\ Rademachers), then $m\gtrsim s^{2-2/q}\log(en/s) + \log(\eta^{-1})$ is sufficient for the recovery guarantee in Theorem~\ref{thm:main}. This follows from Corollary~\ref{cor:condEntries} with $\la=1$, $\al=1/2$ and $\beta,u_*$ universal constants.
\item If the $A_{ij}$ are standard symmetric exponential random variables, then $m\gtrsim s^{2-2/q}\log(en/s) + \log(\eta^{-1})$ suffices for the recovery guarantee in Theorem~\ref{thm:main}. Indeed, in this case one can apply Corollary~\ref{cor:condEntries} with $\la=\al=1$ and take for $\beta,u_*$ universal constants. 
\item Suppose that the $A_{ij}$ are distributed as a random variable $\xi$, which has probability density function 
$$p(x) = \frac{\ga-1}{2\ga} \min\{1,|x|^{-\ga}\}, \qquad x\in\R,$$
for some $\ga>1$. One readily calculates that 
$$\E|\xi|^p = \frac{\ga-1}{\ga}\Big(\frac{1}{\ga-p-1} + \frac{1}{p+1}\Big)$$
for $p<\ga-1$ and $\E|\xi|^p=\infty$ for $p\geq \ga-1$. If we assume $\ga\geq \log(n)+2$, say, then $\xi$ trivially satisfies the moment bound in Corollary~\ref{cor:condEntries} with $\al=1/2$. Moreover, if $\ga>5$ then $\E\xi^2 = (\ga-1)/(3\ga-9)$ and $\E\xi^4=(\ga-1)/(5\ga-25)$ so the Paley-Zygmund inequality (\ref{eqn:P-Zindsum}) implies that (\ref{eqn:SBass}) holds for universal constants $u_*,\beta$ if $\ga\geq 6$, say. In conclusion, if we assume
$$\ga\geq \max\{\log(n)+2,6\},$$
then $m\gtrsim s^{2-2/q}\log(en/s) + \log(\eta^{-1})$ is sufficient for the recovery guarantee in Theorem~\ref{thm:main}. 
\end{enumerate}
\end{example}
The last example illustrates that only the behaviour of the first $\log n$ moments of the entries of $A$ is important for our sparse recovery result, the higher moments need not even exist.\par
To conclude this section, we extend the example of a standard symmetric exponential matrix (part (ii) of Example~\ref{exa:iid}) to matrices with i.i.d.\ isotropic, unconditional, log-concave rows. In particular, we do not assume that the entries within a row are independent. Recall that a probability measure $\mu$ on $\R^n$ is called log-concave if for any (Borel) sets $A,B\subset\R^n$ and $0\leq \theta\leq 1$,
$$\mu(\theta A+(1-\theta)B)\geq \mu(A)^{\theta}\mu(B)^{1-\theta}.$$
A random vector $Y$ is called \emph{log-concave} if its probability distribution is log-concave. We call $Y$ \emph{isotropic} if it is mean-zero and $\E\langle Y,x\rangle^2=\|x\|_2^2$ for all $x\in \R^n$. We say that $Y$ is \emph{unconditional} if, for any $\eps_1,\ldots,\eps_n \in \{-1,1\}$, the vector $(\eps_1Y_1,\ldots, \eps_nY_n)$ has the same distribution as $Y$. A typical example of an isotropic, unconditional log-concave vector $Y$ is a random variable uniformly distributed over the unit ball of an unconditional norm in the isotropic position.\par
We will use the following comparison theorem from \cite{MR2520452} (see also Theorem~2.5 in \cite{radeck_djalil}), which is based on earlier work in \cite{MR2083388}. It will allow us to reduce the general case of matrices with i.i.d.\ isotropic, unconditional, log-concave rows to the special case of a standard symmetric exponential matrix. 
\begin{theorem}\label{theo:bn}
Let $Y$ be an isotropic, unconditional, log-concave vector in $\R^d$ and $E$ be a standard $d$-dimensional symmetric exponential vector, i.e., its entries are i.i.d.\ standard symmetric exponential variables. Let $\norm{\cdot}$ be any semi-norm on $\R^d$. Then for any $t>0$,
\begin{equation*}
\Prob\big[\norm{Y}\geq Ct\big]\leq C\Prob\big[\norm{E}\geq t\big],
\end{equation*}where $C$ is a universal constant.
\end{theorem}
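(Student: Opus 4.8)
The plan is to reduce, using unconditionality, the comparison of the seminorms $\norm{Y}$ and $\norm{E}$ to a \emph{coordinatewise} comparison of the magnitude vectors $|Y|=(|Y_1|,\ldots,|Y_d|)$ and $|E|=(|E_1|,\ldots,|E_d|)$, and then to dispose of the arbitrary seminorm uniformly by a contraction (Kahane) argument for Rademacher sums. The point is that log-concavity will be used in exactly one place, namely to dominate $|Y|$ by a constant multiple of $|E|$; everything else is soft and seminorm-independent.

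First I would record the representation coming from unconditionality. Let $(\eps_i)$ be a Rademacher sequence independent of $Y$. Since $(\delta_i Y_i)_i$ has the same law as $Y$ for every deterministic sign vector $\delta\in\{-1,1\}^d$, averaging gives $(\eps_i Y_i)_i\stackrel{d}{=}Y$; as $(\eps_i\,\sgn(Y_i))_i$ is, conditionally on $|Y|$, again an independent Rademacher sequence, one obtains $Y\stackrel{d}{=}\sum_{i}\eps_i|Y_i|e_i$ with $(\eps_i)$ independent of $|Y|$. The same representation holds for $E$, with independent magnitudes $|E_i|$ satisfying $\Prob(|E_i|\ge t)=e^{-t}$.

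The key step is the coordinatewise stochastic domination: there is a universal constant $C$ such that the law of $|Y|$ is dominated by the law of $C|E|$ in the usual (coordinatewise increasing) stochastic order. This is essentially the content of the Bobkov--Nazarov comparison for isotropic unconditional log-concave measures. In one dimension it is elementary: each $Y_i$ is symmetric, isotropic and log-concave, so Borell's lemma yields a sub-exponential tail $\Prob(|Y_i|\ge t)\le e^{-ct}$, which already gives $|Y_i|\preceq C|E_i|$. The difficulty is the joint statement, since the coordinates of $Y$ are dependent and one must preclude that this dependence conspires to make the joint magnitude tails heavier than the exponential product $\prod_i e^{-t_i}$. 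Here one uses that, by unconditionality and log-concavity, the density of $Y$ restricted to the positive orthant is log-concave and coordinatewise non-increasing, together with the fact that unconditional isotropic bodies have a universally bounded isotropic constant; a Prékopa--Leindler (correlation) argument, run by induction on the dimension, then upgrades the per-coordinate bounds to a joint domination against all coordinatewise increasing test functions. By Strassen's theorem this stochastic order is realized by a coupling in which $|Y_i|\le C|E_i|$ almost surely for all $i$ simultaneously.

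Finally I would assemble the pieces. Fixing such a coupling and conditioning on the magnitudes, the contraction principle for Rademacher sums applied to the convex nondecreasing function $x\mapsto(\norm{x}-s)_+$ gives, because $|Y_i|\le C|E_i|$,
\begin{equation*}
\E_{\eps}\Big(\Big\|\sum_i\eps_i|Y_i|e_i\Big\|-s\Big)_+ \le \E_{\eps}\Big(\Big\|\sum_i\eps_i\,C|E_i|\,e_i\Big\|-s\Big)_+ = \E_{\eps}\big(C\norm{E}-s\big)_+ .
\end{equation*}
Integrating over the coupled magnitudes, invoking the representations of $Y$ and $E$, and converting these truncated moments into tails via Markov's inequality (optimizing over the level $s$) yields $\Prob(\norm{Y}\ge C't)\le C'\,\Prob(\norm{E}\ge t)$ for a universal $C'$, as claimed. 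Note that this route never compares the two densities pointwise, which is essential: at the origin the isotropic density of $Y$ exceeds any fixed dilate of the exponential density by a factor exponential in $d$, so only a coupling-plus-contraction argument can produce \emph{dimension-free} constants. The main obstacle is thus the multidimensional domination step: the one-dimensional log-concave tail bounds are immediate, but controlling the joint law of the dependent magnitudes $|Y|$ by the independent exponential magnitudes $|E|$ with universal constants is precisely the hard Bobkov--Nazarov estimate on which the whole theorem rests.
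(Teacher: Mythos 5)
First, a point of orientation: the paper does not prove Theorem~\ref{theo:bn} itself --- it is imported verbatim from Lata{\l}a \cite{MR2520452} (see also \cite{radeck_djalil}), which in turn rests on Bobkov--Nazarov \cite{MR2083388}. So your proposal has to be measured against the argument in those references, and there it has a genuine gap at its central step. You reduce everything to the claim that $|Y|$ is dominated by $C|E|$ in the \emph{usual} multivariate stochastic order (domination of expectations of all coordinatewise nondecreasing test functions), which you then realize via Strassen's theorem as an almost sure coupling $|Y_i|\leq C|E_i|$ for all $i$ simultaneously. But this is not ``essentially the content of the Bobkov--Nazarov comparison'': what Bobkov--Nazarov prove is the \emph{upper-orthant} domination $\Prob(|Y_1|\geq t_1,\ldots,|Y_d|\geq t_d)\leq \prod_i e^{-t_i/C}$, together with its consequence, the weak (one-dimensional marginal) domination $\Prob(|\langle Y,\theta\rangle|\geq s)\leq\Prob(C|\langle E,\theta\rangle|\geq s)$. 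For $d\geq 2$ the upper-orthant order is strictly weaker than the usual stochastic order, it is not the hypothesis of Strassen/Kamae--Krengel--O'Brien, and it only controls test functions of product type $\prod_i f_i(x_i)$ --- whereas the function $a\mapsto \E_{\eps}F\bigl(\bigl\|\sum_i\eps_i a_i e_i\bigr\|\bigr)$ to which you ultimately want to apply the domination is coordinatewise increasing but not of that form. The ``Pr\'ekopa--Leindler correlation argument by induction on the dimension'' that you invoke to upgrade the orthant bounds to the full stochastic order is precisely the missing proof; you give no argument for it, and I am not aware of such an upgrade being available. This is not a presentational issue: the entire difficulty of the theorem is concentrated in this step, and your outline assumes a stronger multivariate domination than the literature provides.

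For comparison, the actual route in \cite{MR2520452} avoids any coupling of the magnitude vectors. It uses only the weak marginal domination supplied by \cite{MR2083388} and then proves a separate, genuinely nontrivial statement: if a random vector is weakly dominated by a vector with independent symmetric coordinates with log-concave tails (in particular by $C E$), then it is strongly dominated, i.e.\ $\Prob(\norm{Y}\geq Ct)\leq C\Prob(\norm{E}\geq t)$ for every seminorm. The proof of that implication exploits the special structure of the product exponential measure (Talagrand-type two-level deviation inequalities and two-sided estimates for $\norm{\cdot}$ of exponential vectors), not a monotone coupling plus contraction. A secondary, more minor remark: even granting your coupling, the final passage from the truncated-moment comparison $\E(\norm{Y}-s)_+\leq\E(C\norm{E}-s)_+$ to the multiplicative tail bound requires regularity of the tail of $\norm{E}$ (e.g.\ Borell's lemma for the log-concave vector $E$, giving geometric decay beyond a quantile); as written, ``Markov plus optimizing over $s$'' does not by itself produce $\Prob(\norm{Y}\geq C't)\leq C'\Prob(\norm{E}\geq t)$ with a universal $C'$.
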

\begin{corollary}
\label{cor:logconc}
Let $A$ be an $m\ti n$ matrix with i.i.d.\ rows $X_i$ distributed as $X$, where $X$ is an isotropic, unconditional log-concave vector. If $m\gtrsim s^{2-2/q}\log(en/s) + \log(\eta^{-1})$, then the conclusion of Theorem~\ref{thm:main} holds. 
\end{corollary}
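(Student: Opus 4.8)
The plan is to verify the two hypotheses of Theorem~\ref{thm:genDistributions} with universal constants $u_*,\beta,\ka$, after which the stated recovery guarantee follows immediately. The small ball condition will be checked directly from isotropy and log-concavity, while the (more delicate) Rademacher complexity bound will be obtained by transferring the corresponding estimate for the standard symmetric exponential matrix, treated in Example~\ref{exa:iid}(ii), via the comparison Theorem~\ref{theo:bn}.

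First I would dispose of the small ball condition (\ref{eqn:SBass}). For $x\in S_{\ell_2^n}$ the scalar $\langle X,x\rangle$ is a one-dimensional log-concave random variable, and isotropy gives $\E\langle X,x\rangle^2=\|x\|_2^2=1$. Since the moments of a log-concave random variable are comparable, $\E\langle X,x\rangle^4\leq c^4(\E\langle X,x\rangle^2)^2=c^4$ for a universal constant $c$. Applying the Paley--Zygmund inequality (\ref{eqn:P-Z}) to $\zeta=\langle X,x\rangle^2$ then yields $\bP(|\langle X,x\rangle|>u_*)\geq\beta$ with, e.g., $u_*=1/\sqrt{2}$ and $\beta=1/(4c^4)$, uniformly over $S_{\ell_2^n}$.

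The heart of the matter is the Rademacher complexity bound. Writing $\|v\|_{(s)}=\big(\sum_{j=1}^s (v_j^*)^2\big)^{1/2}$, which is a norm on $\R^n$, I would view the map $w=(w^{(1)},\ldots,w^{(m)})\mapsto N(w):=\big\|m^{-1/2}\sum_{i=1}^m w^{(i)}\big\|_{(s)}$ as a semi-norm on $\R^{mn}$. The vector $Y=(X_1,\ldots,X_m)\in\R^{mn}$ is isotropic, unconditional and log-concave, since all three properties pass to tuples of i.i.d.\ copies. Crucially, unconditionality forces $(\eps_1 X_1,\ldots,\eps_m X_m)$ to have the same law as $Y$, so that for $V=m^{-1/2}\sum_i\eps_i X_i$ one has $\E\|V\|_{(s)}=\E N(Y)$. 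Applying Theorem~\ref{theo:bn} to the semi-norm $N$, with $E$ a standard $mn$-dimensional symmetric exponential vector, and integrating the resulting tail bound $\bP[N(Y)\geq Ct]\leq C\,\bP[N(E)\geq t]$ over $t>0$, gives $\E N(Y)\leq C^2\,\E N(E)$. But $\E N(E)=\E\big\|m^{-1/2}\sum_i E_i\big\|_{(s)}$ equals, again by symmetry of the exponential law, exactly the quantity already bounded by $\lesssim\sqrt{s\log(en/s)}$ in the exponential case of Corollary~\ref{cor:condEntries} (with $\la=\al=1$). Hence the second hypothesis of Theorem~\ref{thm:genDistributions} holds with a universal $\ka$.

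The main obstacle, and the one genuinely new ingredient, is the reduction in the previous paragraph: recognizing the randomized row-sum $V$ as the image of a single high-dimensional isotropic, unconditional, log-concave vector under a fixed semi-norm, so that Theorem~\ref{theo:bn} becomes applicable. The sign-symmetry step absorbing the Rademachers $\eps_i$ into the $X_i$ is what makes this packaging legitimate, while the passage from the tail comparison to a comparison of expectations is a routine integration. With both hypotheses verified for universal constants, Theorem~\ref{thm:genDistributions} yields the conclusion of Theorem~\ref{thm:main} under the assumed scaling $m\gtrsim s^{2-2/q}\log(en/s)+\log(\eta^{-1})$.
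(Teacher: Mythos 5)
Your proposal is correct and follows essentially the same route as the paper: Paley--Zygmund combined with moment comparison for log-concave variables (Borell's lemma) for the small ball condition, and the absorption of the Rademacher signs via unconditionality so that the randomized row-sum becomes a fixed semi-norm of the isotropic, unconditional, log-concave vector $(X_1,\ldots,X_m)\in\R^{mn}$, to which Theorem~\ref{theo:bn} and integration of the tail comparison apply, reducing to the exponential case of Corollary~\ref{cor:condEntries}. No substantive differences.
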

\begin{proof}
We verify the conditions of Theorem~\ref{thm:genDistributions}. By a result of Borell (see e.g.\ \cite[Proposition 2.14]{Led01}), $X$ is a sub-exponential vector. In fact, for any $p\geq 1$,
$$(\E|\langle X,x\rangle|^p)^{1/p} \lesssim p\E|\langle X,x\rangle| \qquad \operatorname{for \ all \ } x\in \R^n.$$
Since $X$ is isotropic, we can apply (\ref{eqn:P-Z}) for $|\langle X,x\rangle|^2$ to get
$$\bP(|\langle X,x\rangle|>u)\geq \frac{(\E|\langle X,x\rangle|^2 - u^2)^2}{\E|\langle X,x\rangle|^4} \gtrsim (1-u^2)^2,$$
whenever $0\leq u\leq 1$ and $\|x\|_2=1$. This shows that (\ref{eqn:SBass}) holds with absolute constants $u_*,\beta>0$.\par
To prove the second condition, we define a semi-norm on $\R^{m\times n}$ by
\begin{equation*}
\norm{B}_s=\sup_{x\in \Si_s^2}\Big\langle \sum_{i=1}^m B_{i},x\Big\rangle 
\end{equation*}
where the $B_i$ are the $m$ row vectors of $B\in \R^{m\times n}$. Since the $X_i$ are unconditional,
$$\E\sup_{x\in \Si_s^2}\langle V,x\rangle = \frac{1}{\sqrt{m}}\E\norm{A}_s.$$
Considered as a vector in $\R^{mn}$, $A$ is isotropic, unconditional and log-concave. Theorem~\ref{theo:bn} therefore implies that, 
\begin{equation*}
  \Prob\big[\norm{A}_s\geq C t\big]\leq C \Prob\big[\norm{\mathbb{\mathcal{E}}}_s\geq  t\big], 
\end{equation*}
where $\mathcal{E}$ is an $m\times n$ standard symmetric exponential matrix. As a consequence, we have
\begin{equation*}
\E\norm{A}_s = \int_0^\infty \Prob\big[\norm{A}_s\geq t\big] dt \leq C^2 \int_0^\infty \Prob\big[\norm{\cE}_s\geq  t\big] dt
  \lesssim \E\norm{\cE}_s. 
\end{equation*}
By the proof of Corollary~\ref{cor:condEntries} (see also (ii) of Example~\ref{exa:iid}), $\E\norm{\cE}_s\lesssim \sqrt{ms\log(en/s)}$, which proves the second condition in Theorem~\ref{thm:genDistributions}.
\end{proof}
As was mentioned before, Koltchinskii showed that $m\gtrsim s\log(en/s)$ isotropic, log-concave measurements suffice with high probability to recover every $s$-sparse vector exactly via $\ell_1$-minimization \cite[Theorem~7.3]{Kol11}. Under the additional assumption that the measurement vectors are unconditional, Corollary~\ref{cor:logconc} makes this result stable with respect to approximate sparsity and robust with respect to measurement noise, while retaining the optimal number of measurements.

 \section{RIP RIP?}
 \label{sec:RIPRIP}

The classical RIP property, (RIP$_{2,2}$), played a major role in the theory of compressed sensing since \cite{cata05,CaT06}. It has proved to be an optimal tool to analyze standard basis pursuit denoising for subgaussian matrices. It has also been used to show that various other random matrices, including structured random matrices, allow for uniform sparse recovery via (BPDN$_2$) if one increases the number of measurements with additional logarithmic factors. Nevertheless, it is known that for certain ensembles (e.g.\ subexponential) this logarithmic increase can be avoided, establishing a gap between RIP and sparse recovery conditions.  

In this work we showed that this gap becomes much more pronounced when considering (BPDN$_p$) for $p\neq 2$. An analysis of this program via an RIP condition erroneously suggests that 1) the required optimal number of measurements for uniform sparse recovery may be much larger than in the case $p=2$, especially if $p>2$, and 2) that one may need to consider random measurements different from Gaussian to attain this optimal number. This begs the question: does this mean that researchers interested in sparse recovery should stop
considering restricted isometry properties? In this paper we showed that by proving a lower (RIP$_{p,q}$)-type of bound on an \emph{extension} of the set of sparse vectors (cf.\ (\ref{eqn:claim1})), one can prove an optimal recovery result for a large class of matrices, which do not satisfy (RIP$_{p,q}$) in the optimal measurement regime. Thus, it seems the gap between RIP-properties and sparse recovery conditions originates in the upper bound of the RIP ``for all $x\in\Sigma_s, \norm{Ax}_p\leq C \norm{x}_q$'' -- at least when considering convex optimization approaches for recovery.

To move towards a definitive answer of our question, it would be interesting to determine whether similar gaps occur between RIP-properties and sparse recovery conditions for other numerical methods. For example, there are several algorithms such as iterative hard thresholding and CoSamp for which convergence results are currently only known under the (classical) RIP. 

\section*{Acknowledgements}

H.~Rauhut acknowledges funding by the European Research Council through the Starting Grant StG 258926.


\begin{thebibliography}{10}

\bibitem{ALL11}
R.~Adamczak, R.~Lata{\l}a, A.~Litvak, A.~Pajor, and N.~Tomczak-Jaegermann.
\newblock Geometry of log-concave ensembles of random matrices and approximate
  reconstruction.
\newblock {\em C. R. Math. Acad. Sci. Paris}, 349(13-14):783--786, 2011.

\bibitem{MR2796091}
R.~Adamczak, A.~E. Litvak, A.~Pajor, and N.~Tomczak-Jaegermann.
\newblock Restricted isometry property of matrices with independent columns and
  neighborly polytopes by random sampling.
\newblock {\em Constr. Approx.}, 34(1):61--88, 2011.

\bibitem{AGR15}
Z.~Allen-Zhu, R.~Gelashvili, and I.~Razenshteyn.
\newblock The restricted isometry property for the general p-norms.
\newblock {\em arXiv:1407.2178}, 2014.

\bibitem{BGI08}
R.~Berinde, A.~Gilbert, P.~Indyk, H.~Karloff, and M.~Strauss.
\newblock Combining geometry and combinatorics: A unified approach to sparse
  signal recovery.
\newblock In {\em Communication, Control, and Computing, 2008 46th Annual
  Allerton Conference on}, pages 798--805, Sept 2008.

\bibitem{MR2083388}
S.~G. Bobkov and F.~L. Nazarov.
\newblock On convex bodies and log-concave probability measures with
  unconditional basis.
\newblock In {\em Geometric aspects of functional analysis}, volume 1807 of
  {\em Lecture Notes in Math.}, pages 53--69. Springer, Berlin, 2003.

\bibitem{BJK14}
P.~T. Boufounos, L.~Jacques, F.~Krahmer, and R.~Saab.
\newblock Quantization and compressive sensing.
\newblock {\em arXiv:1405.1194}, 2014.

\bibitem{carota06}
E.~J. {C}and{\`e}s, {J}., T.~{T}ao, and J.~K. {R}omberg.
\newblock {R}obust uncertainty principles: exact signal reconstruction from
  highly incomplete frequency information.
\newblock {\em {I}{E}{E}{E} {T}rans. {I}nform. {T}heory}, 52(2):489--509, 2006.

\bibitem{cata05}
E.~J. {C}and{\`e}s and T.~{T}ao.
\newblock {D}ecoding by linear programming.
\newblock {\em {I}{E}{E}{E} {T}rans. {I}nform. {T}heory}, 51(12):4203--4215,
  2005.

\bibitem{CaT06}
E.~J. Candes and T.~Tao.
\newblock Near-optimal signal recovery from random projections: universal
  encoding strategies?
\newblock {\em IEEE Trans. Inform. Theory}, 52(12):5406--5425, 2006.

\bibitem{radeck_djalil}
D.~Chafa{\"{i}} and R.~Adamczak.
\newblock Circular law for random matrices with unconditional log-concave
  distribution.
\newblock 2014.
\newblock arXiv:1303.5838.

\bibitem{Cha10}
V.~B. Chandar.
\newblock {\em Sparse graph codes for compression, sensing, and secrecy}.
\newblock PhD thesis, Massachusetts Institute of Technology, 2010.

\bibitem{do06-2}
D.~L. {D}onoho.
\newblock {C}ompressed sensing.
\newblock {\em {I}{E}{E}{E} {T}rans. {I}nform. {T}heory}, 52(4):1289--1306,
  2006.

\bibitem{MR3134335}
S.~Foucart.
\newblock Stability and robustness of {$\ell_1$}-minimizations with {W}eibull
  matrices and redundant dictionaries.
\newblock {\em Linear Algebra Appl.}, 441:4--21, 2014.

\bibitem{FoR13}
S.~Foucart and H.~Rauhut.
\newblock {\em A Mathematical Introduction to Compressive Sensing}.
\newblock Birkh{\"{a}}user, Boston, 2013.

\bibitem{Fuc09}
J.-J. Fuchs.
\newblock Fast implementation of a {$\ell_1$}-{$\ell_1$} regularized sparse
  representations algorithm.
\newblock In {\em Acoustics, Speech and Signal Processing, 2009. ICASSP 2009.
  IEEE International Conference on}, pages 3329--3332, April 2009.

\bibitem{JHF11}
L.~Jacques, D.~K. Hammond, and J.~M. Fadili.
\newblock Dequantizing compressed sensing: when oversampling and non-{G}aussian
  constraints combine.
\newblock {\em IEEE Trans. Inform. Theory}, 57(1):559--571, 2011.

\bibitem{KaR15}
M.~Kabanava and H.~Rauhut.
\newblock Analysis $\ell_1$-recovery with frames and {G}aussian measurements.
\newblock {\em Acta Appl. Math.}, to appear.
\newblock {DOI}:10.1007/s10440-014-9984-y.

\bibitem{Kol11}
V.~Koltchinskii.
\newblock {\em Oracle {I}nequalities in {E}mpirical {R}isk {M}inimization and
  {S}parse {R}ecovery {P}roblems}.
\newblock Springer, Berlin, 2011.

\bibitem{KoM13}
V.~Koltchinskii and S.~Mendelson.
\newblock Bounding the smallest singular value of a random matrix without
  concentration.
\newblock {\em Int. Math. Res. Notices}, to appear.
\newblock arXiv:1312.3580.

\bibitem{MR2520452}
R.~Lata{\l}a.
\newblock On weak tail domination of random vectors.
\newblock {\em Bull. Pol. Acad. Sci. Math.}, 57(1):75--80, 2009.

\bibitem{LeM14}
G.~Lecu{\'e} and S.~Mendelson.
\newblock Sparse recovery under weak moment assumptions.
\newblock {\em J. Eur. Math. Soc.}, to appear.
\newblock ArXiv:1401.2188.

\bibitem{Led01}
M.~Ledoux.
\newblock {\em The concentration of measure phenomenon}, volume~89 of {\em
  Mathematical Surveys and Monographs}.
\newblock American Mathematical Society, Providence, RI, 2001.

\bibitem{Sh14_2}
S.~Mendelson.
\newblock Learning without concentration for general loss functions.
\newblock {\em arXiv:1410.3192}, 2014.

\bibitem{Sh14_3}
S.~Mendelson.
\newblock A remark on the diameter of random sections of convex bodies.
\newblock In B.~Klartag and E.~Milman, editors, {\em Geometric Aspects of
  Functional Analysis (GAFA Seminar Notes)}, volume 2116 of {\em Lecture notes
  in Mathematics}, pages 395--404, 2014.

\bibitem{Sh14_1}
S.~Mendelson.
\newblock Learning without concentration.
\newblock {\em J. ACM}, to appear.
\newblock arXiv:1401.0304.

\bibitem{MR2373017}
S.~Mendelson, A.~Pajor, and N.~Tomczak-Jaegermann.
\newblock Reconstruction and subgaussian operators in asymptotic geometric
  analysis.
\newblock {\em Geom. Funct. Anal.}, 17(4):1248--1282, 2007.

\bibitem{Nat95}
B.~K. Natarajan.
\newblock Sparse approximate solutions to linear systems.
\newblock {\em SIAM Journal on Computing}, 24(2):227--234, 1995.

\bibitem{pi85}
A.~{P}inkus.
\newblock {\em $n$-{W}idths in approximation theory}.
\newblock {S}pringer-{V}erlag, 1985.

\bibitem{RuV08}
M.~Rudelson and R.~Vershynin.
\newblock On sparse reconstruction from {F}ourier and {G}aussian measurements.
\newblock {\em Comm. Pure Appl. Math.}, 61(8):1025--1045, 2008.

\end{thebibliography}

\end{document}